\documentclass[12pt,a4paper]{article}

\usepackage[utf8]{inputenc} 
\usepackage[T1]{fontenc}
\usepackage{url}
\usepackage{ifthen}
\usepackage{cite}
\usepackage[cmex10]{amsmath} 
\usepackage{amssymb}
\usepackage{bm}
\usepackage{amsthm,amsfonts,ascmac}
\usepackage{cases}
\usepackage{mathtools}
\usepackage{color}

\bibliographystyle{plain}
\theoremstyle{plain}

\newtheorem{thm}{Theorem}[section]
\newtheorem{dfn}[thm]{Definition}
\newtheorem{lem}[thm]{Lemma}

\newtheorem{fact}[thm]{Fact}
\newtheorem{ex}[thm]{Example}

\begin{document}
\title{The equivalence between correctability of\\deletions and insertions of separable states\\in quantum codes}

\author{
Taro Shibayama \thanks{
Department of Mathematics and Informatics,
Graduate School of Science,
Chiba University, Japan
Email: shibayama@kaijo.ed.jp
}
\and
Yingkai Ouyang \thanks{
Department of Electrical and Computer Engineering,
National University of Singapore, Singapore,
Email: yingkai@nus.edu.sg
}
}

\date{}
\maketitle

\begin{abstract}
In this paper, we prove the equivalence of inserting separable quantum states and deletions.
Hence any quantum code that corrects deletions automatically corrects separable insertions.
First, we describe the quantum insertion/deletion error using the Kraus operators.
Next, we develop an algebra for commuting Kraus operators corresponding to insertions and deletions.
Using this algebra, we prove the equivalence between quantum insertion codes and quantum deletion codes using the Knill-Laflamme conditions.
\end{abstract}

\section{Introduction}

In quantum coding theory, erasures can be modeled using a partial trace where the traced qubits are known, but for deletions, we do not know what the traced qubits are.
We can also interpret deletions as erasures implemented by an adversary who hides information about which qubits were erased.
Hence, correcting deletions is harder than correcting erasures.
Similar to deletion errors, an insertion error occurs when a quantum state is inserted at unknown locations within a quantum code.

Quantum codes for a single deletion error were very recently studied by Nakayama and Hagiwara \cite{Nakayama20201,Hagiwara20202}.
A systematic construction of single deletion codes that encompasses these examples has been proposed \cite{Nakayama20202}, with more examples given by Shibayama \cite{9366159}.
Since it is clear that erasure and deletion errors are equivalent in permutation-invariant codes \cite{ouyang2021,shibayama2021},
Ouyang's permutation-invariant quantum codes \cite{Ouyang2014,Ouyang2017} are also quantum deletion codes.
Research on quantum codes correcting insertion errors on the other hand has only just begun, with the discovery that the four qubit deletion code can also correct a single insertion error \cite{ManabuHagiwara20212020XBL0191}.

Insertion/deletion error-correcting codes were first given by Levenshtein in 1966\cite{Levenshtein1966}, and it was shown that a code that can correct $t$ deletion errors can correct $t_1$ insertion errors and $t_2$ deletion errors if $t=t_1+t_2$.
This fact can be explained by using the Levenshtein distance\cite{971760}.
In classical coding theory, the development of codes for deletions and insertions is mature, and has a vast literature \cite{osti_10121528,9375003}. 
Indeed, classical insdel codes have received invigorated attention because of interesting applications such as for DNA storage \cite{TiloLeonid2013}, and racetrack memories \cite{8294216}.

This paper takes a novel perspective on quantum insertion/deletion errors by discussing them using the Kraus operator formalism \cite{Hellwig1969,Hellwig1970} for quantum channels.
This perspective allows us to use the Knill-Laflamme (KL) conditions for quantum error correction \cite{knill1997} which are traditionally written in terms of the Kraus operators of the noisy quantum channel.
Leveraging on the KL conditions, we prove the equivalence of the correctability between insertion of separable states and deletion errors for {\em any} quantum code.

The outline of the rest of this paper is as follows.
Section \ref{pre} introduces some notations and definitions, in particular the Kraus operators, the Knill-Laflamme conditions, and the quantum insertion/deletion channels.
The quantum insertion errors that we consider can introduce any separable state into unknown locations within a quantum code.
In this section, we also state the main theorem of this paper as Theorem \ref{thm25}.
In Section \ref{lems}, we state some lemmas about the calculation rules we require to prove the main theorem.
Section \ref{kraus} describes the Kraus form of insertion/deletion channels.
In Section \ref{prf}, we complete the proof of the main theorem.
Finally, we conclude this paper in Section \ref{conc}.

\section{Preliminaries}\label{pre}

Let $N$ be a positive integer and $[N]\coloneqq\{1,2,\dots ,N\}$.
For a square matrix $M$ over the complex field $\mathbb{C}$, we denote by ${\rm Tr}(M)$ the sum of the diagonal elements of $M$.
We fix $\mathcal{Q}\coloneqq\{0,1,\ldots,q-1\}$ for some integer $q\geq2$.
Let $|0\rangle, |1\rangle,\dots,|q-1\rangle$ be the standard orthonormal basis of $\mathbb{C}^q$, i.e.,
$|0\rangle\coloneqq(1,0,\dots,0)^\top,|1\rangle\coloneqq(0,1,0,\dots,0)^\top,\dots,|q-1\rangle\coloneqq(0,\dots,0,1)^\top$.
We denote $C_q\coloneqq\{|\psi\rangle\in\mathbb{C}^q\mid\langle\psi|\psi\rangle=1\}$ and let $|\Psi\rangle=|\psi_1\psi_2\cdots\psi_N\rangle\coloneqq|\psi_1\rangle\otimes|\psi_2\rangle\otimes\cdots\otimes|\psi_N\rangle\in (C_q)^{\otimes N}$.
Here, $\otimes$ is the tensor product operation and $\top$ is the transpose operation.
Let $\langle\bm{x}|\coloneqq|\bm{x}\rangle^\dag$ denote the conjugate transpose of $|\bm{x}\rangle$.
A positive semi-definite Hermitian matrix of trace $1$ is called a density matrix.
We denote by $S(\mathbb{C}^{q\otimes N})$ the set of all density matrices of order $q^N$. 
An element of $S(\mathbb{C}^{q\otimes N})$ is called an $N$-qudit quantum state.

\subsection{Kraus operators and Knill-Laflamme conditions}

Here, we review the Knill-Laflamme quantum error-correction criterion\cite{knill1997}.
A linear map $\Phi:S(\mathbb{C}^{q\otimes N})\rightarrow S(\mathbb{C}^{q\otimes N'})$ with positive integers $N,N'$ is a quantum channel, if and only if it is completely positive and trace-preserving.
For any quantum channel $\Phi:S(\mathbb{C}^{q\otimes N})\rightarrow S(\mathbb{C}^{q\otimes N'})$, there exist linear operators $A_i$ such that for every $\rho\in S(\mathbb{C}^{q\otimes N})$, $\Phi(\rho)=\sum_{i}A_i\rho A_i^\dag$ holds and $\sum_iA_i^\dag A_i$ is the identity operator on $S(\mathbb{C}^{q\otimes N})$.
The linear operators $A_i$ are known as Kraus operators of $\Phi$ and their representation is not unique.

Given a quantum channel $\mathcal{N}$, a subspace $\mathcal{C}$ of $\mathbb{C}^{q\otimes N}$ is a quantum code that corrects all errors introduced by $\mathcal{N}$, if and only if there exists a quantum channel $\mathcal{R}$ such that for every density matrix $\rho$ supported on $\mathcal{C}$, $\mathcal{R}(\mathcal{N}(\rho))=\rho$.
The necessary and sufficient conditions for quantum error-correction below were originally proved by Knill and Laflamme\cite{knill1997}.
\begin{fact}[Knill-Laflamme]\label{kl}
Let $\mathcal{C}$ be a $d$-dimensional subspace of complex Hilbert space $\mathbb{C}^{q\otimes N}$ with orthogonal basis vectors $|0_L\rangle,|1_L\rangle,\dots,|d-1_L\rangle$.
Let $\mathcal{N}$ be a quantum channel with Kraus operators $A_i$.
Suppose that for all $i,j$ there exist $g_{i,j}\in\mathbb{C}$ such that the following conditions hold.
\begin{itemize}
\item[1.] Orthogonality conditions:\\$\langle a_L|A_i^\dag A_j|b_L\rangle=0$ for all $a\neq b$.
\item[2.] Non-deformation conditions:\\$\langle a_L|A_i^\dag A_j|a_L\rangle=g_{i,j}$ for all $a=0,1,\dots,d-1$.
\end{itemize}
Then for every density matrix $\rho$ supported on $\mathcal{C}$, there exists a quantum channel $\mathcal{R}$ such that $\mathcal{R}(\mathcal{N}(\rho))=\rho$.
\end{fact}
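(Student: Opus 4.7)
The plan is to construct the recovery channel $\mathcal{R}$ explicitly from the Kraus operators $A_i$, leveraging the algebraic structure that the KL conditions force on them. The strategy proceeds in four steps: (i) show the matrix $g=(g_{i,j})$ is Hermitian positive semi-definite; (ii) diagonalize $g$ to obtain a canonical Kraus representation whose operators act as scaled mutually orthogonal isometries on $\mathcal{C}$; (iii) define $\mathcal{R}$ as a syndrome-measurement followed by an inverse isometry; and (iv) verify $\mathcal{R}\circ\mathcal{N}=\mathrm{id}$ on states supported in $\mathcal{C}$.

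For step (i), Hermiticity follows from $g_{j,i}^{\,*}=\langle a_L|A_j^\dag A_i|a_L\rangle^{*}=\langle a_L|A_i^\dag A_j|a_L\rangle=g_{i,j}$, and positive semi-definiteness from
\[
\sum_{i,j} c_i^{\,*}\,c_j\,g_{i,j}=\Bigl\lVert \sum_i c_i\,A_i\,|a_L\rangle \Bigr\rVert^{2}\geq 0.
\]
Trace preservation $\sum_i A_i^\dag A_i=I$ also gives $\sum_i g_{i,i}=1$. For step (ii), diagonalize $g=U^\dag D\, U$ with $U$ unitary and $D=\mathrm{diag}(d_0,d_1,\ldots)$, $d_k\geq 0$, and define the new Kraus operators $B_k\coloneqq \sum_i U_{k,i} A_i$. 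Because $U$ is unitary, the $B_k$ represent the same channel $\mathcal{N}$. Combining both KL conditions with the diagonalization yields the clean orthonormalization
\[
\langle a_L|B_k^\dag B_l|b_L\rangle = d_k\,\delta_{kl}\,\delta_{ab}.
\]

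For step (iii), for each $k$ with $d_k>0$ define $V_k:\mathcal{C}\to\mathbb{C}^{q\otimes N}$ by $V_k|a_L\rangle\coloneqq B_k|a_L\rangle/\sqrt{d_k}$ extended linearly. The orthonormalization above shows $V_k$ is an isometry and that the ranges $\mathcal{C}_k\coloneqq V_k(\mathcal{C})$ are pairwise orthogonal. Let $P_k$ denote the orthogonal projection onto $\mathcal{C}_k$. I define the recovery channel via Kraus operators $R_k\coloneqq V_k^\dag P_k$, plus one extra operator $R_\perp$ that maps the complement of $\bigoplus_k \mathcal{C}_k$ into any fixed state of $\mathcal{C}$, so that $\sum_k R_k^\dag R_k+R_\perp^\dag R_\perp=I$ and $\mathcal{R}$ is a well-defined (trace-preserving, completely positive) quantum channel.

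Finally, for step (iv), any pure state $|\psi\rangle\in\mathcal{C}$ satisfies $\mathcal{N}(|\psi\rangle\langle\psi|)=\sum_k d_k\,V_k|\psi\rangle\langle\psi|V_k^\dag$ in the diagonal representation, and applying $\mathcal{R}$ yields
\[
\mathcal{R}\bigl(\mathcal{N}(|\psi\rangle\langle\psi|)\bigr)=\sum_{k,l} d_l\,V_k^\dag P_k V_l|\psi\rangle\langle\psi|V_l^\dag P_k V_k=\Bigl(\sum_k d_k\Bigr)|\psi\rangle\langle\psi|=|\psi\rangle\langle\psi|,
\]
where the second equality uses $V_k^\dag P_k V_l=\delta_{kl}\,I_\mathcal{C}$ (from orthogonality of the $\mathcal{C}_k$ together with $V_k$ being an isometry) and $\sum_k d_k=\mathrm{Tr}(g)=1$. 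Linearity of $\mathcal{R}\circ\mathcal{N}$ then extends this identity from pure states to every density matrix supported on $\mathcal{C}$. The main obstacle is the bookkeeping in step (ii): carefully combining the orthogonality and non-deformation conditions under the unitary change of Kraus representation to obtain the single clean relation $\langle a_L|B_k^\dag B_l|b_L\rangle=d_k\delta_{kl}\delta_{ab}$; once this orthonormal block-isometry structure is in hand, the remaining construction of the recovery and its verification are essentially routine packaging.
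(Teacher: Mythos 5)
The paper does not prove this statement at all: it is labeled a \emph{Fact} and cited to Knill and Laflamme \cite{knill1997}, so there is no in-paper argument to compare against. Your proof is the standard sufficiency argument (as in Knill--Laflamme and in Nielsen--Chuang's treatment) and it is correct: Hermiticity and positive semi-definiteness of $g$, the unitary change of Kraus representation giving $\langle a_L|B_k^\dag B_l|b_L\rangle=d_k\delta_{kl}\delta_{ab}$, the orthogonal-subspace isometries $V_k$, and the recovery $R_k=V_k^\dag P_k$ all go through; the operators $B_k$ with $d_k=0$ annihilate $\mathcal{C}$ and so contribute nothing, and $\sum_k d_k=\mathrm{Tr}(g)=1$ closes the computation. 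The only cosmetic point is your ``one extra operator $R_\perp$'': a single operator of the form $|0_L\rangle\langle\eta|$ cannot satisfy $R_\perp^\dag R_\perp=I-\sum_kP_k$ unless that projector has rank one, so you should either take $R_\perp=WP_\perp$ for an arbitrary isometry $W$ on the range of $P_\perp$, or a family $R_{\perp,m}=|0_L\rangle\langle e_m|$ over an orthonormal basis of the orthocomplement; since these operators annihilate $\mathcal{N}(\rho)$ for $\rho$ supported on $\mathcal{C}$, the choice is immaterial to the recovery identity.
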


\subsection{Quantum insertion and deletion channels}

Let $M=\sum_{\bm{x},\bm{y}\in\mathcal{Q}^N}m_{\bm{x},\bm{y}}|\bm{x}\rangle\langle\bm{y}|$ be a square matrix with $m_{\bm{x},\bm{y}}\in\mathbb{C}$.
For integers $p_1\in [N+1]$, $p_2\in [N]$, and a one qudit quantum state $\sigma\in S(\mathbb{C}^{q})$, define ${\rm In}_{p_1,\sigma}^N:S(\mathbb{C}^{q\otimes N})\rightarrow S(\mathbb{C}^{q\otimes (N+1)})$, and the partial trace ${\rm Tr}_{p_2}^N:S(\mathbb{C}^{q\otimes N})\rightarrow S(\mathbb{C}^{q\otimes (N-1)})$ respectively as 
\begin{align*}
{\rm In}_{p_1,\sigma}^N(M)\coloneqq&\sum_{\bm{x},\bm{y}\in\mathcal{Q}^N}m_{\bm{x},\bm{y}}|x_1\rangle\langle y_1|\otimes\dots\otimes|x_{p_1-1}\rangle\langle y_{p_1-1}|\\
&\otimes\sigma\otimes|x_{p_1}\rangle\langle y_{p_1}|\otimes\dots\otimes|x_N\rangle\langle y_N|,\\
{\rm Tr}_{p_2}^N(M)\coloneqq&\sum_{\bm{x},\bm{y}\in\mathcal{Q}^N}m_{\bm{x},\bm{y}}{\rm Tr}(|x_{p_2}\rangle\langle y_{p_2}|)|x_1\rangle\langle y_1|\otimes\cdots\otimes|x_{p_2-1}\rangle\langle y_{p_2-1}|\\
&\otimes|x_{p_2+1}\rangle\langle y_{p_2+1}|\otimes\cdots\otimes|x_N\rangle\langle y_N|.
\end{align*}

\begin{dfn}[$t$-insertion channel ${\rm Ins}_{t}^N$]\label{def22}
Let $t$ be a positive integer and let $\sigma=\sigma_1\otimes\sigma_2\otimes\dots\otimes\sigma_t\in S(\mathbb{C}^{q\otimes t})$, where $\sigma_i\in S(\mathbb{C}^q)$ is a one qudit quantum state for every $i\in [t]$.
For a set $P=\{p_1,p_2,\dots,p_t\}\subset [N+t]$ with $p_1<p_2<\dots<p_t$, define a map ${\rm Ins}_{P,\sigma}^N:S(\mathbb{C}^{q\otimes N})\rightarrow S(\mathbb{C}^{q\otimes(N+t)})$ as 
\begin{align*}
{\rm Ins}_{P,\sigma}^N(\rho)\coloneqq{\rm In}_{p_t,\sigma_t}^{N+t-1}\circ\dots\circ{\rm In}_{p_2,\sigma_2}^{N+1}\circ{\rm In}_{p_1,\sigma_1}^N(\rho),
\end{align*}
where $\rho\in S(\mathbb{C}^{q\otimes N})$ is a quantum state.
Here, the symbol $\circ$ denotes the composition of maps.
We call the map ${\rm Ins}_{P,\sigma}^N$ a $(t,P,\sigma)$-insertion error.
We define a $t$-insertion channel ${\rm Ins}_t^N$ as a convex combination of all $(t,P,\sigma)$-insertion errors,
 where $t$ is fixed and $|P|=t$, i.e.,
\begin{align*}
{\rm Ins}_t^N(\rho)\coloneqq\int_{\sigma
\in S(\mathbb{C}^{q\otimes t})
}\mu(\sigma)
\sum_{P:|P|=t}p_\sigma(P){\rm Ins}_{P,\sigma}^N(\rho)
d\sigma,
\end{align*}
where $\mu(\sigma)$ and $p_\sigma(P)$ are probability distributions.
Note that $\mu$ is a measure.
\end{dfn}

\begin{dfn}[$t$-deletion channel ${\rm Del}_t^N$]\label{def23}
Let $t<N$ be a positive integer.
For a set $P=\{p_1,p_2,\dots,p_t\}\subset [N]$ with $p_1<p_2<\dots<p_t$, define a map ${\rm Era}_P^N:S(\mathbb{C}^{q\otimes N})\rightarrow S(\mathbb{C}^{q\otimes(N-t)})$ as 
\begin{align*}
{\rm Era}_P^N(\rho)\coloneqq{\rm Tr}_{p_1}^{N-t+1}\circ\dots\circ{\rm Tr}_{p_{t-1}}^{N-1}\circ{\rm Tr}_{p_t}^{N}(\rho),
\end{align*}
where $\rho\in S(\mathbb{C}^{q\otimes N})$ is a quantum state.
We call the map ${\rm Era}_P^N$ a $(t,P)$-erasure error.
We define a $t$-deletion channel ${\rm Del}_t^N$ as a convex combination of all $(t,P)$-erasure errors, where $t$ is fixed and $|P|=t$, i.e.,
\begin{align*}
{\rm Del}_t^N(\rho)\coloneqq\sum_{P:|P|=t}p(P){\rm Era}_{P}^N(\rho),
\end{align*}
where $p(P)$ is a probability distribution.
\end{dfn}

Extending the above definitions, we take $0$-insertion and $0$-deletion channels to represent identity maps.

\begin{dfn}[$(t_1,t_2)$-insdel channel ${\rm InsDel}_{t_1,t_2}^{N}$]\label{def24}
Let $t_1, t_2$ be non-negative integers with $t_2<N$.
We define a $(t_1,t_2)$-insdel channel ${\rm InsDel}_{t_1,t_2}^{N}\!:\!S(\mathbb{C}^{q\otimes N})\rightarrow S(\mathbb{C}^{q\otimes (N+t_1-t_2)})$ as
\begin{align*}
{\rm InsDel}_{t_1,t_2}^{N}(\rho)\coloneqq{\rm Ins}_{t_1}^{N-t_2}\circ{\rm Del}_{t_2}^N(\rho),
\end{align*}
where $\rho\in S(\mathbb{C}^{q\otimes N})$ is a quantum state.
\end{dfn}

\subsection{Our main theorem}
A $d$-dimensional $(t_1,t_2)$-insdel code is a quantum code (a $d$-dimensional subspace of $\mathbb{C}^{q\otimes N}$) that can perfectly correct errors introduced by any $(t_1,t_2)$-insdel channel.
We denote by $\mathcal{C}\subset\mathbb{C}^{q\otimes N}$ the $(t_1,t_2)$-insdel quantum code spanned by the orthonormal logical codewords $|0_L\rangle,|1_L\rangle,\dots,|d-1_L\rangle$.
Our main theorem concerns $(t_1,t_2)$-insdel quantum codes.
In particular, we describe the equivalence between insertion and deletion error-correction capability in quantum codes.
The following Theorem \ref{thm25} is the main contribution of this paper.

\begin{thm}\label{thm25}
Let $t_1,t_2,s_1,s_2$ be non-negative integers where $t_1+t_2=s_1+s_2$.
Then, the $(t_1,t_2)$-insdel code is an $(s_1,s_2)$-insdel code.
\end{thm}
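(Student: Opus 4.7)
The strategy is to use the Knill--Laflamme criterion (Fact~\ref{kl}) as an invariant that depends only on the linear span of the Kraus-operator products $\{A_i^\dag A_j\}$: a code $\mathcal{C}$ corrects a channel $\mathcal{N}$ if and only if $\langle a_L|A_i^\dag A_j|b_L\rangle=g_{ij}\delta_{ab}$ for all pairs of Kraus operators $A_i,A_j$. To show that a $(t_1,t_2)$-insdel code is also an $(s_1,s_2)$-insdel code whenever $t_1+t_2=s_1+s_2$, it therefore suffices to prove that the set of products $\{M_i^\dag M_j\}$ for the Kraus operators of the $(t_1,t_2)$-channel spans the same subspace of operators on $\mathbb{C}^{q\otimes N}$ as the analogous set $\{N_k^\dag N_l\}$ for the $(s_1,s_2)$-channel. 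Since the statement of the theorem is symmetric in the two pairs of indices, a single-direction span containment applied in each direction will yield the claimed equivalence.

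First I would invoke the Kraus description of Section~\ref{kraus} to write each Kraus operator of an insdel channel explicitly as an ordered composition of single-qudit insertion isometries and single-qudit partial-trace projectors, one factor for each insertion and each deletion, with scalar weights coming from the spectral decomposition of the inserted separable state. Any product $M_i^\dag M_j$ then becomes a chain of $2(t_1+t_2)=2t$ elementary single-qudit operations acting on the original $N$-qudit Hilbert space.

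The central step is to apply the commutation and annihilation rules for insertions and partial traces developed in Section~\ref{lems} to reduce every such chain to a canonical form depending only on $t$. Two moves drive the reduction: an insertion immediately followed by its adjoint at the same site collapses into an inner-product scalar, and an insertion and a partial-trace projector at distinct sites can be swapped past each other at the cost of a controlled index shift. Iterating these moves rewrites each $M_i^\dag M_j$ as a linear combination of products $N_k^\dag N_l$ of Kraus operators of the $(s_1,s_2)$-insdel channel (and conversely), establishing the required span equality and hence the Knill--Laflamme equivalence between the two correction problems.

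The main obstacle I expect is the combinatorial bookkeeping for the position indices. Each insertion or deletion relabels the remaining qudit positions, so tracking how the indices shift through a chain of $2t$ moves, and verifying that the commutations line up consistently when the original sites $I,I'$ and $D,D'$ of the two Kraus operators in a product only partially overlap, will require a careful case analysis. A secondary difficulty is that the insertion channel is defined as an integral over $\sigma\in S(\mathbb{C}^q)$ rather than as a finite sum, so one must confirm that the span-based reduction behaves well under integration against the measure $\mu$ and the discrete distributions $p_\sigma$ and $p$, so that the Knill--Laflamme conditions for the integrated Kraus family transfer cleanly from the $(t_1,t_2)$-case to the $(s_1,s_2)$-case.
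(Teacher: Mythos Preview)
Your proposal is correct and follows essentially the same route as the paper: reduce to the Knill--Laflamme conditions, write every Kraus operator of an insdel channel as an ordered product of single-qudit $I$ and $D$ factors (Section~\ref{kraus}), and then use the commutation/collapse rules of Section~\ref{lems} (Lemmas~\ref{lem32}--\ref{lem34}) to rewrite each product $B_{\bm u}^\dag B_{\bm v}$ for one channel as a scalar multiple of some $A_{\bm u'}^\dag A_{\bm v'}$ for the other. The paper packages this as two unit-step reductions (Lemma~\ref{lem51}: $(t_1,t_2)\Rightarrow(t_1-1,t_2+1)$; Lemma~\ref{lem52}: the reverse direction) instead of a single span-equality statement, and in fact obtains a single product rather than a genuine linear combination at each step, but this is organizational rather than substantive.
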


The remaining part of the paper is devoted to proving this. 

\section{Lemmas of tensor product calculation}\label{lems}
In this section, we introduce the rules of calculations necessary for the proof of our main theorem.

Let $n\geq0$, $t\geq1$ be integers and let $|\Psi\rangle=|\psi_1\psi_2\cdots\psi_t\rangle\in (C_q)^{\otimes t}$.
For a set $P=\{p_1,p_2\dots,p_t\}\subset [n+t]$ with $p_1<p_2<\dots<p_t$, we define $q^{n+t}$-by-$q^{n}$ matrix $I_{P,|\Psi\rangle}^{n}$ and $q^{n}$-by-$q^{n+t}$ matrix $D_{P,\langle\Psi|}^{n}$ as $I_{P,|\Psi\rangle}^{n}\coloneqq A_1\otimes A_2\otimes\dots\otimes A_{n+t}$ and $D_{P,\langle\Psi|}^{n}\coloneqq B_1\otimes B_2\otimes\dots\otimes B_{n+t}$, respectively, where
\begin{align*}
A_j\coloneqq
\begin{cases}
|\psi_i\rangle&j=p_i\in P,\\
\mathbb{I}_q&j\notin P,
\end{cases}~~
B_j\coloneqq
\begin{cases}
\langle\psi_i|&j=p_i\in P,\\
\mathbb{I}_q&j\notin p,
\end{cases}
\end{align*}
for $j\in[n+t]$.
Here, $\mathbb{I}_q$ denotes a size $q$ identity matrix.
Note that the superscript of these matrices represents the number of $\mathbb{I}_q$'s included as a factor of the tensor product.
When $t=1$, we simply denote $I_{P,|\Psi\rangle}^{n}$ and $D_{P,\langle\Psi|}^{n}$ as $I_{p_1,|\psi_1\rangle}^{n}$ and $D_{p_1,\langle\psi_1|}^{n}$, respectively.
It is clear from the definition that
\begin{align}
I_{P,|\Psi\rangle}^{n}={D_{P,\langle\Psi|}^{n}}^\dag,~~~~D_{P,\langle\Psi|}^{n}={I_{P,|\Psi\rangle}^{n}}^\dag.\label{eq1}
\end{align}

\begin{lem}\label{lem31} Let $n\geq0$, $t\geq1$ be integers.
Suppose that $P=\{p_1,p_2,\dots,p_t\}\subset [n+t]$ with $p_1<p_2<\cdots <p_t$ and $|\Psi\rangle=|\psi_1\psi_2\cdots\psi_t\rangle\in (C_q)^{\otimes t}$.
Then, 
\begin{itemize}
\item[1.] $I_{P,|\Psi\rangle}^{n}=I_{p_t,|\psi_t\rangle}^{n+t-1}I_{p_{t-1},|\psi_{t-1}\rangle}^{n+t-2}\cdots I_{p_{2},|\psi_{2}\rangle}^{n+1}I_{p_{1},|\psi_{1}\rangle}^{n}$.
\item[2.] $D_{P,\langle\Psi|}^{n}=D_{p_1,\langle\psi_1|}^{n}D_{p_2,\langle\psi_2|}^{n+1}\cdots D_{p_{t-1},\langle\psi_{t-1}|}^{n+t-2}D_{p_{t},\langle\psi_t|}^{n+t-1}$.
\end{itemize}
\end{lem}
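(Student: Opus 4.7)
The plan is to prove both parts by induction on $t$, and then to derive part~2 quickly from part~1 by taking adjoints rather than repeating the argument.

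The base case $t=1$ is immediate: by the notational convention just before the lemma, $I_{\{p_1\},|\psi_1\rangle}^{n} = I_{p_1,|\psi_1\rangle}^{n}$ and likewise for $D$, so both sides of each equation trivially coincide.

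For the inductive step of part~1, I would peel off the largest index. Set $P' \coloneqq \{p_1,\ldots,p_{t-1}\}\subset[n+t-1]$ (permissible since $p_{t-1}<p_t\le n+t$, hence $p_{t-1}\le n+t-1$) and $|\Psi'\rangle \coloneqq |\psi_1\cdots\psi_{t-1}\rangle$. The inductive hypothesis already factors $I_{P',|\Psi'\rangle}^{n}$ as the claimed product of single-slot insertions, so it is enough to verify the one identity
\[
I_{p_t,|\psi_t\rangle}^{n+t-1}\cdot I_{P',|\Psi'\rangle}^{n} \;=\; I_{P,|\Psi\rangle}^{n}.
\]
The decisive structural observation is that, because $p_t>p_i$ for every $i<t$, none of the positions $p_t,p_t+1,\ldots,n+t-1$ lies in $P'$; hence every tensor slot of $I_{P',|\Psi'\rangle}^{n}$ from position $p_t$ onward is $\mathbb{I}_q$. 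This lets me split both operators into a ``left block'' of the first $p_t-1$ factors and a ``right block'' of the remaining factors in a dimension-compatible way, and then invoke the mixed-product rule $(X_1\otimes X_2)(Y_1\otimes Y_2)=(X_1Y_1)\otimes(X_2Y_2)$. The effect is precisely to splice $|\psi_t\rangle$ into position $p_t$ of $I_{P',|\Psi'\rangle}^{n}$ without disturbing any other slot, producing $I_{P,|\Psi\rangle}^{n}$ on the nose.

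For part~2, instead of mirroring the induction, I would apply the $\dag$ operation to both sides of part~1, reverse the product order using $(AB)^\dag=B^\dag A^\dag$, and invoke equation~\eqref{eq1} to convert each $I^\dag$ into the corresponding $D$. The resulting ordering on the right is automatically the ascending one from $p_1$ up to $p_t$, as required.

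The main obstacle I anticipate is purely notational: making sure that the two halves of the tensor regrouping carry compatible row/column dimensions so that the mixed-product rule actually applies, and that the resulting positions index the $\psi_i$'s correctly in the combined $(n+t)$-fold product. Once the inequality $p_{t-1}<p_t$ has been used to guarantee that the ``tail'' block is purely identity, the computation itself is mechanical.
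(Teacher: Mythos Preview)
Your proposal is correct and matches the paper's approach. The paper itself only remarks that the lemma ``can be easily shown by direct calculations as matrices'' and that part~2 follows from part~1 via equation~\eqref{eq1}; your induction on $t$ together with the mixed-product rule is precisely a careful way to carry out that direct computation, and your derivation of part~2 by taking adjoints is exactly what the paper indicates.
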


\begin{lem}\label{lem32} Let $n$ be a non-negative integer, and let $p_1,p_2\in [n+2]$ with $p_1\leq p_2$ and $|\psi_1\rangle,|\psi_2\rangle\in C_q$.
Then,
\begin{itemize}
\item[1.] $I_{p_1,|\psi_1\rangle}^{n+1}I_{p_2,|\psi_2\rangle}^{n}=I_{p_2+1,|\psi_2\rangle}^{n+1}I_{p_1,|\psi_1\rangle}^{n}$.
\item[2.] $D_{p_2,\langle\psi_2|}^{n}D_{p_1,\langle\psi_1|}^{n+1}=D_{p_1,\langle\psi_1|}^{n}D_{p_2+1,\langle\psi_2|}^{n+1}$.
\end{itemize}
\end{lem}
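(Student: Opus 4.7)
The plan is to verify part 1 by a direct computation on computational basis states, and then to deduce part 2 almost for free by Hermitian conjugation.

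For part 1, I would first unfold the definition of $I_{p,|\psi\rangle}^n$: acting on a basis vector $|x_1\cdots x_n\rangle$, it inserts $|\psi\rangle$ at position $p$, producing the $(n+1)$-qudit state $|x_1\rangle\otimes\cdots\otimes|x_{p-1}\rangle\otimes|\psi\rangle\otimes|x_p\rangle\otimes\cdots\otimes|x_n\rangle$. Applying the LHS of part 1 to $|x_1\cdots x_n\rangle$, the inner factor $I_{p_2,|\psi_2\rangle}^n$ places $|\psi_2\rangle$ in slot $p_2$ of an $(n+1)$-qudit string; the outer factor $I_{p_1,|\psi_1\rangle}^{n+1}$ then, because $p_1\le p_2$, inserts $|\psi_1\rangle$ in slot $p_1$ ahead of $|\psi_2\rangle$. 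The resulting $(n+2)$-qudit state has $|\psi_1\rangle$ in slot $p_1$, $|\psi_2\rangle$ in slot $p_2+1$, and the original $|x_k\rangle$'s filling the remaining slots in order. Applying the RHS, the inner factor $I_{p_1,|\psi_1\rangle}^n$ first places $|\psi_1\rangle$ in slot $p_1$, which shifts every original $|x_k\rangle$ with $k\ge p_1$ one slot to the right; the outer factor $I_{p_2+1,|\psi_2\rangle}^{n+1}$ then drops $|\psi_2\rangle$ into slot $p_2+1$, reproducing exactly the same final configuration. Since the two operators agree on every basis vector they are equal.

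For part 2 I would simply take the Hermitian conjugate of the identity from part 1, use $(AB)^\dag=B^\dag A^\dag$, and apply Equation~(\ref{eq1}), which identifies $(I_{p,|\psi\rangle}^n)^\dag$ with $D_{p,\langle\psi|}^n$. Daggering the LHS of part 1 gives $D_{p_2,\langle\psi_2|}^{n}D_{p_1,\langle\psi_1|}^{n+1}$ and daggering the RHS gives $D_{p_1,\langle\psi_1|}^{n}D_{p_2+1,\langle\psi_2|}^{n+1}$, which is exactly part 2.

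The only real difficulty is the index bookkeeping, in particular tracking the shift $p_2\mapsto p_2+1$ forced on the second insertion by the earlier insertion at $p_1\le p_2$, and the boundary case $p_1=p_2$, in which $|\psi_2\rangle$ must land immediately after $|\psi_1\rangle$ on both sides; a careful but routine check shows that this is indeed what both orderings produce.
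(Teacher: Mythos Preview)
Your proposal is correct and follows essentially the same approach as the paper: the paper states that part 1 is obtained by ``direct calculations as matrices'' (your computation on basis vectors is exactly such a direct calculation) and that part 2 then follows from Equation~(\ref{eq1}), precisely as you do via Hermitian conjugation.
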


Lemmas \ref{lem31} and \ref{lem32} can be easily shown by direct calculations as matrices.
For each lemma, after the first equation is shown, the second is easily shown from Equation (\ref{eq1}).

The following Lemmas \ref{lem33} and \ref{lem34} give commutation rules for the insertion Kraus operators $I$ and the deletion Kraus operators $D$. When $I$ and $D$ act on different qudits, they can be interchanged.
This is clear operationally, and we prove this algebraically.

\begin{lem}\label{lem33}
Let $n$ be a positive integer, and let $p_1,p_2\in [n+1]$ and $|\psi_1\rangle,|\psi_2\rangle\in C_q$.
Then, 
\begin{align*}
D_{p_2,\langle\psi_2|}^{n}I_{p_1,|\psi_1\rangle}^{n}=
\begin{cases}
I_{p_1,|\psi_1\rangle}^{n-1}D_{p_2-1,\langle\psi_2|}^{n-1}&p_1<p_2,\\
\langle\psi_2|\psi_1\rangle\mathbb{I}_{q^{n}}&p_1=p_2,\\
I_{p_1-1,|\psi_1\rangle}^{n-1}D_{p_2,\langle\psi_2|}^{n-1}&p_1>p_2.\\
\end{cases}
\end{align*}
\end{lem}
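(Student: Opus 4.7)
The plan is a case-by-case direct computation, treating the three cases $p_1=p_2$, $p_1<p_2$, and $p_1>p_2$ separately. In each case I would write both sides as tensor products of elementary matrices ($\mathbb{I}_q$, the ket $|\psi_1\rangle$, the bra $\langle\psi_2|$, or a scalar) and apply the mixed product rule $(A_1\otimes\cdots\otimes A_k)(B_1\otimes\cdots\otimes B_k)=(A_1B_1)\otimes\cdots\otimes(A_kB_k)$, which holds whenever the factorwise row and column dimensions align.

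The case $p_1=p_2=p$ is immediate: the $(n+1)$-factor decompositions of $I_{p,|\psi_1\rangle}^{n}$ and $D_{p,\langle\psi_2|}^{n}$ carry $\mathbb{I}_q$ in every slot except $p$, where they contribute $|\psi_1\rangle$ and $\langle\psi_2|$ respectively; the mixed product rule places the scalar $\langle\psi_2|\psi_1\rangle$ in slot $p$, yielding $\langle\psi_2|\psi_1\rangle\,\mathbb{I}_{q^n}$. For $p_1<p_2$, applying the mixed product rule directly to the $(n+1)$-factor forms of $D_{p_2,\langle\psi_2|}^{n}$ and $I_{p_1,|\psi_1\rangle}^{n}$ gives
\begin{align*}
D_{p_2,\langle\psi_2|}^{n}I_{p_1,|\psi_1\rangle}^{n}=\mathbb{I}_q^{\otimes(p_1-1)}\otimes|\psi_1\rangle\otimes\mathbb{I}_q^{\otimes(p_2-p_1-1)}\otimes\langle\psi_2|\otimes\mathbb{I}_q^{\otimes(n+1-p_2)}.
\end{align*}
To match this with $I_{p_1,|\psi_1\rangle}^{n-1}D_{p_2-1,\langle\psi_2|}^{n-1}$, the difficulty is that the natural $n$-factor decompositions of these two operators carry their non-identity slot at different positions ($p_1$ and $p_2-1$), so the mixed product rule is not directly applicable. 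I would regroup the identity factors so that both operators share a common block structure on the intermediate $q^{n-1}$-dimensional space with block widths $p_1-1$, $p_2-p_1-1$, $1$, $n-p_2+1$, after which a single application of the mixed product rule reproduces the same tensor expression as the LHS. Equivalently, one can evaluate both sides on a generic basis tensor $|y_L\rangle\otimes|y_M\rangle\otimes|y_C\rangle\otimes|y_R\rangle$ with factor dimensions $q^{p_1-1}$, $q^{p_2-p_1-1}$, $q$, $q^{n-p_2+1}$; both yield $\langle\psi_2|y_C\rangle\cdot|y_L\rangle\otimes|\psi_1\rangle\otimes|y_M\rangle\otimes|y_R\rangle$. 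The case $p_1>p_2$ is handled in the same way with the roles of the two indices (and the corresponding index shifts) swapped.

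The only real obstacle is this bookkeeping step when $p_1\neq p_2$: on the RHS the intermediate space of dimension $q^{n-1}$ is factored differently by the two operators being multiplied, so either some regrouping of identities is required to force alignment of the mixed product rule, or, equivalently, one evaluates on basis tensors. Beyond this, every step is a routine application of the mixed product rule together with the elementary identities $\mathbb{I}_q|\psi\rangle=|\psi\rangle$, $\langle\psi|\mathbb{I}_q=\langle\psi|$, and $\langle\psi_2|\psi_1\rangle\in\mathbb{C}$, and the index shifts have to be tracked carefully but present no conceptual difficulty.
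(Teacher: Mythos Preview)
Your proposal is correct and follows essentially the same route as the paper: a case-by-case computation in which both sides are written as aligned tensor products and the mixed product rule is applied. The paper resolves the bookkeeping issue you flag for the RHS when $p_1\neq p_2$ by inserting $\mathbb{I}_1$ factors so that both $I_{p_1,|\psi_1\rangle}^{n-1}$ and $D_{p_2-1,\langle\psi_2|}^{n-1}$ have the same five-block tensor structure, which is exactly your ``regrouping of identities'' idea.
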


\begin{proof}
When $p_1<p_2$, simple calculations give
\begin{eqnarray*}
D_{p_2,\langle\psi_2|}^{n}I_{p_1,|\psi_1\rangle}^{n}
&=&(\mathbb{I}_{q^{p_1-1}}\otimes\mathbb{I}_q\otimes\mathbb{I}_{q^{p_2-p_1-1}}\otimes\langle\psi_2|\otimes\mathbb{I}_{q^{n-p_2+1}})\\
&&(\mathbb{I}_{q^{p_1-1}}\otimes|\psi_1\rangle\otimes\mathbb{I}_{q^{p_2-p_1-1}}\otimes\mathbb{I}_q\otimes\mathbb{I}_{q^{n-p_2+1}})\\
&=&\mathbb{I}_{q^{p_1-1}}\otimes|\psi_1\rangle\otimes\mathbb{I}_{q^{p_2-p_1-1}}\otimes\langle\psi_2|\otimes\mathbb{I}_{q^{n-p_2+1}},\\
I_{p_1,|\psi_1\rangle}^{n-1}D_{p_2-1,\langle\psi_2|}^{n-1}
&=&(\mathbb{I}_{q^{p_1-1}}\otimes|\psi_1\rangle\otimes\mathbb{I}_{q^{p_2-p_1-1}}\otimes\mathbb{I}_1\otimes\mathbb{I}_{q^{n-p_2+1}})\\
&&(\mathbb{I}_{q^{p_1-1}}\otimes\mathbb{I}_1\otimes\mathbb{I}_{q^{p_2-p_1-1}}\otimes\langle\psi_2|\otimes\mathbb{I}_{q^{n-p_2+1}})\\
&=&\mathbb{I}_{q^{p_1-1}}\otimes|\psi_1\rangle\otimes\mathbb{I}_{q^{p_2-p_1-1}}\otimes\langle\psi_2|\otimes\mathbb{I}_{q^{n-p_2+1}}.
\end{eqnarray*}
Hence, we obtain $D_{p_2,\langle\psi_2|}^{n}I_{p_1,|\psi_1\rangle}^{n}=I_{p_1,|\psi_1\rangle}^{n-1}D_{p_2-1,\langle\psi_2|}^{n-1}$.

The case $p_1>p_2$ is shown similarly, and the case $p_1=p_2$ trivially holds.
\end{proof}

\begin{lem}\label{lem34}
Let $n$ be a non-negative integer, and let $p_1,p_2\in [n+1]$ and $|\psi_1\rangle,|\psi_2\rangle\in C_q$.
Then, 
\begin{align*}
I_{p_1,|\psi_1\rangle}^{n}D_{p_2,\langle\psi_2|}^{n}=
\begin{cases}
D_{p_2+1,\langle\psi_2|}^{n+1}I_{p_1,|\psi_1\rangle}^{n+1}&p_1\leq p_2,\\
D_{p_2,\langle\psi_2|}^{n+1}I_{p_1+1,|\psi_1\rangle}^{n+1}&p_1\geq p_2.\\
\end{cases}
\end{align*}
\end{lem}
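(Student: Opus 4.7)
The plan is to deduce Lemma \ref{lem34} directly from Lemma \ref{lem33} by reading the latter's commutation rule in reverse. Lemma \ref{lem33} expresses $D \cdot I$ as $I \cdot D$ (with prescribed index shifts); Lemma \ref{lem34} expresses $I \cdot D$ as $D \cdot I$ (with different shifts). Since these are the same identity read from opposite sides, it suffices to verify that the proposed right-hand side of Lemma \ref{lem34}, when simplified via Lemma \ref{lem33}, collapses to the left-hand side.

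For the case $p_1 \leq p_2$, I would apply Lemma \ref{lem33} to the right-hand side $D_{p_2+1,\langle\psi_2|}^{n+1} I_{p_1,|\psi_1\rangle}^{n+1}$. Instantiating Lemma \ref{lem33} with superscript $n+1$, insertion index $p_1$, and deletion index $p_2+1$, the hypothesis $p_1 < p_2+1$ (which holds since $p_1 \leq p_2$) puts us in its first branch, yielding
\begin{align*}
D_{p_2+1,\langle\psi_2|}^{n+1} I_{p_1,|\psi_1\rangle}^{n+1} = I_{p_1,|\psi_1\rangle}^{n} D_{(p_2+1)-1,\langle\psi_2|}^{n} = I_{p_1,|\psi_1\rangle}^{n} D_{p_2,\langle\psi_2|}^{n},
\end{align*}
which is exactly the left-hand side of Lemma \ref{lem34}.

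For $p_1 \geq p_2$, I would apply Lemma \ref{lem33} similarly to $D_{p_2,\langle\psi_2|}^{n+1} I_{p_1+1,|\psi_1\rangle}^{n+1}$, now with insertion index $p_1+1$ and deletion index $p_2$. Since $p_1+1 > p_2$, this falls into the third branch of Lemma \ref{lem33}, which gives
\begin{align*}
D_{p_2,\langle\psi_2|}^{n+1} I_{p_1+1,|\psi_1\rangle}^{n+1} = I_{(p_1+1)-1,|\psi_1\rangle}^{n} D_{p_2,\langle\psi_2|}^{n} = I_{p_1,|\psi_1\rangle}^{n} D_{p_2,\langle\psi_2|}^{n},
\end{align*}
again the left-hand side. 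The overlap at $p_1 = p_2$ falls under both branches, and a brief check confirms that the two reductions deliver the same identity, so the two cases in the statement are mutually consistent.

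The only real work is bookkeeping. The $\pm 1$ shifts between the insertion and deletion indices on the two sides encode whether the insertion site precedes or follows the deletion site, since inserting a qudit before a deletion site pushes that site one step to the right. The index-range hypotheses $p_1, p_2+1 \in [n+2]$ (respectively $p_1+1, p_2 \in [n+2]$) needed to invoke Lemma \ref{lem33} at level $n+1$ follow immediately from $p_1, p_2 \in [n+1]$. I do not anticipate any real obstacle; the algebraic content of Lemma \ref{lem34} is already entirely contained in Lemma \ref{lem33}.
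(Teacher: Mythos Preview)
Your proposal is correct and follows exactly the approach the paper indicates: the paper's entire proof is the remark that Lemma~\ref{lem34} ``can be derived immediately from Lemma~\ref{lem33},'' and you have simply spelled out that derivation by applying Lemma~\ref{lem33} at level $n+1$ to each proposed right-hand side. The index-range checks and the handling of the overlap at $p_1=p_2$ are all fine.
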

Lemma \ref{lem34} can be derived immediately from Lemma \ref{lem33}.

\section{Kraus operators for insdel errors}\label{kraus}

Here, we elucidate properties of the Kraus operators of insdel channels.
\begin{lem}\label{lem41}
For any quantum state $\rho\in S(\mathbb{C}^{q\otimes N})$, the state after inserting a separable state $\sigma\in S(\mathbb{C}^{q\otimes t})$ in the locations labeled by $P\subset [N+t]$ can be expressed as
\begin{align*}
{\rm Ins}_{P,\sigma}^N(\rho)=\sum_{\bm{a}\in\mathcal{Q}^{t}}p(\bm{a})I_{P,U|\bm{a}\rangle}^N\rho {I_{P,U|\bm{a}\rangle}^N}^\dag
\end{align*}
with some probability distribution $p(\bm{a})$ for $\bm{a}\in\mathcal{Q}^t$ and unitary matrix $U$.
\end{lem}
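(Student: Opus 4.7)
The plan is to exploit the product structure of $\sigma$ by spectrally decomposing each single-qudit factor and then using Lemma \ref{lem31} to assemble the resulting sequence of single-site insertions into one tensor operator.

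First, since $\sigma = \sigma_1 \otimes \cdots \otimes \sigma_t$ with each $\sigma_i$ a density matrix on $\mathbb{C}^q$, I would spectrally diagonalize each factor: there exist a unitary $U_i$ on $\mathbb{C}^q$ and a probability distribution $p_i$ on $\mathcal{Q}$ with $\sigma_i = \sum_{a_i\in\mathcal{Q}} p_i(a_i)\,U_i|a_i\rangle\langle a_i|U_i^\dagger$. Setting $U \coloneqq U_1 \otimes \cdots \otimes U_t$ and $p(\bm{a}) \coloneqq \prod_{i=1}^t p_i(a_i)$, taking the tensor product over $i \in [t]$ gives the convex decomposition
\begin{align*}
\sigma \;=\; \sum_{\bm{a}\in\mathcal{Q}^t} p(\bm{a})\,(U|\bm{a}\rangle)(U|\bm{a}\rangle)^\dag,
\end{align*}
in which $U|\bm{a}\rangle = U_1|a_1\rangle \otimes \cdots \otimes U_t|a_t\rangle$ is a pure product state in $(C_q)^{\otimes t}$, as required by the definition of $I_{P,\cdot}^N$.

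Second, I would reduce to the case of inserting a pure product state. Inspection of the defining formula of ${\rm In}_{p,\sigma}^N$ shows it is linear in its second argument $\sigma$, and hence so is the composition ${\rm Ins}_{P,\sigma}^N$ from Definition \ref{def22}. Applying this linearity to the decomposition above yields
\begin{align*}
{\rm Ins}_{P,\sigma}^N(\rho) \;=\; \sum_{\bm{a}\in\mathcal{Q}^t} p(\bm{a})\,{\rm Ins}_{P,(U|\bm{a}\rangle)(U|\bm{a}\rangle)^\dag}^N(\rho),
\end{align*}
so it suffices to prove that for any pure product state $|\Psi\rangle = |\psi_1\cdots\psi_t\rangle \in (C_q)^{\otimes t}$,
\begin{align*}
{\rm Ins}_{P,|\Psi\rangle\langle\Psi|}^N(\rho) \;=\; I_{P,|\Psi\rangle}^N\,\rho\,(I_{P,|\Psi\rangle}^N)^\dag.
\end{align*}

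Finally, I would prove this single-Kraus identity by induction on $t$. For $t=1$, writing $\sigma_1 = |\psi_1\rangle\langle\psi_1|$ and comparing the defining formula of ${\rm In}_{p_1,|\psi_1\rangle\langle\psi_1|}^N$ term by term against $I_{p_1,|\psi_1\rangle}^N\rho(I_{p_1,|\psi_1\rangle}^N)^\dag = (\mathbb{I}_{q^{p_1-1}}\otimes|\psi_1\rangle\otimes\mathbb{I}_{q^{N-p_1+1}})\rho(\mathbb{I}_{q^{p_1-1}}\otimes\langle\psi_1|\otimes\mathbb{I}_{q^{N-p_1+1}})$ gives equality directly. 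For the inductive step, Definition \ref{def22} expresses ${\rm Ins}_{P,|\Psi\rangle\langle\Psi|}^N$ as ${\rm In}_{p_t,|\psi_t\rangle\langle\psi_t|}^{N+t-1}\circ\cdots\circ{\rm In}_{p_1,|\psi_1\rangle\langle\psi_1|}^N$, and applying the $t=1$ case at each layer produces a left-to-right product of the operators $I_{p_i,|\psi_i\rangle}^{N+i-1}$ (and their adjoints on the right of $\rho$); Lemma \ref{lem31}(1) collapses this product exactly into $I_{P,|\Psi\rangle}^N$, and Equation (\ref{eq1}) then handles the adjoint side. The main obstacle here is purely bookkeeping: the positions $p_1<\cdots<p_t$ are labels in the final $(N+t)$-qudit space, not in the intermediate spaces, so the Kraus operator at step $i$ must be read with the correct superscript $N+i-1$; this is precisely the combinatorics encapsulated by Lemma \ref{lem31}, which is why invoking it is the clean way forward.
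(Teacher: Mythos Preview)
Your proposal is correct and follows essentially the same approach as the paper: spectrally decompose each one-qudit factor $\sigma_i$ via a unitary $U_i$, take $U=U_1\otimes\cdots\otimes U_t$ and $p(\bm a)=\prod_i p_i(a_i)$, and then use Lemma~\ref{lem31} to collapse the composition of single-site insertions into $I_{P,U|\bm a\rangle}^N$. The only cosmetic difference is that the paper verifies ${\rm In}_{p,\tau}^n(\rho)=\sum_i c_i I_{p,U|i\rangle}^n\rho{I_{p,U|i\rangle}^n}^\dag$ for a mixed $\tau$ first and then composes, whereas you invoke (multi)linearity to reduce to pure product insertions before composing; the substance is identical.
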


\begin{proof}
For any $n\geq1$, $p\in[n+1]$, $|\psi\rangle\in C_q$, and $\bm{x}\in\mathcal{Q}^n$,
\begin{eqnarray*}
I_{p,|\psi\rangle}^n|\bm{x}\rangle&=&(\mathbb{I}_{q^{p-1}}\otimes |\psi\rangle\otimes\mathbb{I}_{q^{n-p+1}})(|x_1\cdots x_{p-1}\rangle\otimes\mathbb{I}_1\otimes|x_p\cdots x_n\rangle)\nonumber\\
&=&|x_1\cdots x_{p-1}\psi x_p\cdots x_{n}\rangle
\end{eqnarray*}
holds.
Let $\tau=\sum_{i\in\mathcal{Q}}c_i|\psi_i\rangle\langle\psi_i|$ be the spectral decomposition of $\tau\in S(\mathbb{C}^q)$, where $c_i$ are probabilities and $\langle\psi_i|\psi_j\rangle=\delta_{i,j}$ for $i,j\in\mathcal{Q}$.
Here, $\delta_{i,j}$ is the Kronecker delta function.
Note that there exists a unitary matrix $U$ such that $|\psi_i\rangle=U|i\rangle$ for every $i\in\mathcal{Q}$.
For a quantum state $\rho=\sum_{\bm{x},\bm{y}\in\mathcal{Q}^n}m_{\bm{x},\bm{y}}|\bm{x}\rangle\langle \bm{y}|$, 
\begin{eqnarray*}
{\rm In}_{p,\tau}^n(\rho)&=&\sum_{i\in\mathcal{Q}}c_i\left({\textstyle\sum_{\bm{x},\bm{y}\in\mathcal{Q}^n}m_{\bm{x},\bm{y}}|\bm{x}_i\rangle\langle \bm{y}_i|}\right)\\
&=&\sum_{i\in\mathcal{Q}}c_iI_{p,|\psi_i\rangle}^n\left(\rule{0pt}{2.5ex}{\textstyle\sum_{\bm{x},\bm{y}\in\mathcal{Q}^n}m_{\bm{x},\bm{y}}|\bm{x}\rangle\langle \bm{y}|}\right){I_{p,|\psi_i\rangle}^n}^\dag\\
&=&\sum_{i\in\mathcal{Q}}c_iI_{p,U|i\rangle}^n\rho{I_{p,U|i\rangle}^n}^\dag
\end{eqnarray*}
holds, where $|\bm{x}_i\rangle=|x_1\cdots x_{p-1}\psi_i x_{p}\cdots x_n\rangle$ and $|\bm{y}_i\rangle=|y_1\cdots y_{p-1}\psi_i y_{p}\cdots y_n\rangle$.
Assume that $\sigma=\sigma_1\otimes\sigma_2\otimes\cdots\otimes\sigma_t$ and
$\sigma_k=\sum_{i\in\mathcal{Q}}c_i^k|\psi_i^k\rangle\langle\psi_i^k|$ and $|\psi_i^k\rangle=U_k|i\rangle$ for $k\in[t]$.
By Definition \ref{def22} and Lemma \ref{lem31}, we obtain
\begin{eqnarray*}
{\rm Ins}_{P,\sigma}^N(\rho)
&=&{\rm In}_{p_t,\sigma_t}^{N+t-1}\circ\dots\circ{\rm In}_{p_1,\sigma_1}^N(\rho)\\
&=&\sum_{i_t\in\mathcal{Q}}\dots\sum_{i_1\in\mathcal{Q}}c_{i_t}^t\cdots c_{i_1}^1I_{p_t,U_t|i_t\rangle}^{N+t-1}\cdots I_{p_1,U_1|i_1\rangle}^N \rho {I_{p_1,U_1|i_1\rangle}^N}^\dag\cdots{I_{p_t,U_t|i_t\rangle}^{N+t-1}}^\dag\\
&=&\sum_{\bm{a}\in\mathcal{Q}^{t}}p(\bm{a})I_{P,U|\bm{a}\rangle}^N\rho {I_{P,U|\bm{a}\rangle}^N}^\dag,
\end{eqnarray*}
where $p(\bm{a})=c_{a_1}^1\cdots c_{a_t}^t$ and $U=U_1\otimes\cdots\otimes U_t$.
\end{proof}

From Definition \ref{def22} and Lemma \ref{lem41}, we get the Kraus form for insertion channels, which is represented as
\begin{align}
{\rm Ins}_t^N(\rho)=\int_{U}
\sum_{P,\bm{a}}\mu_1(U,P,\bm{a})I_{P,U|\bm{a}\rangle}^N\rho{I_{P,U|\bm{a}\rangle}^N}^\dag
dU,\label{eq2}
\end{align}
where $\mu_1$ is a probability distribution.

\begin{lem}\label{lem42}
 For any quantum state $\rho\in S(\mathbb{C}^{q\otimes N})$, the state after deleting the qudits labeled by $P\subset [N]$ is
\begin{align*}
{\rm Era}_P^N(\rho)=\sum_{\bm{a}\in\mathcal{Q}^t}D_{P,\langle\bm{a}|}^{N-t}\rho {D_{P,\langle\bm{a}|}^{N-t}}^\dag.
\end{align*}
\end{lem}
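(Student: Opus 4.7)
The plan is to mirror the proof of Lemma \ref{lem41}, but for partial traces instead of insertions, and in this case we do not even need the spectral decomposition since the partial trace has a canonical Kraus representation on the computational basis.

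First I would establish the single-qudit base case: for any $\rho=\sum_{\bm{x},\bm{y}\in\mathcal{Q}^N}m_{\bm{x},\bm{y}}|\bm{x}\rangle\langle\bm{y}|$ and any $p\in[N]$,
\begin{align*}
{\rm Tr}_{p}^{N}(\rho)=\sum_{a\in\mathcal{Q}}D_{p,\langle a|}^{N-1}\rho{D_{p,\langle a|}^{N-1}}^{\dag}.
\end{align*}
This follows from a direct calculation: $D_{p,\langle a|}^{N-1}|\bm{x}\rangle=\delta_{a,x_p}|x_1\cdots x_{p-1}x_{p+1}\cdots x_N\rangle$, so summing over $a\in\mathcal{Q}$ produces the factor $\langle y_p|x_p\rangle=\delta_{x_p,y_p}={\rm Tr}(|x_p\rangle\langle y_p|)$, which exactly matches the defining formula for ${\rm Tr}_{p}^{N}$.

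Next I would iterate this base case according to Definition \ref{def23}. Since ${\rm Era}_P^N = {\rm Tr}_{p_1}^{N-t+1}\circ\cdots\circ{\rm Tr}_{p_{t-1}}^{N-1}\circ{\rm Tr}_{p_t}^{N}$, applying the single-qudit formula at each step yields
\begin{align*}
{\rm Era}_P^N(\rho)=\sum_{a_1,\dots,a_t\in\mathcal{Q}}D_{p_1,\langle a_1|}^{N-t}\cdots D_{p_{t-1},\langle a_{t-1}|}^{N-2}D_{p_t,\langle a_t|}^{N-1}\,\rho\,{D_{p_t,\langle a_t|}^{N-1}}^{\dag}{D_{p_{t-1},\langle a_{t-1}|}^{N-2}}^{\dag}\cdots{D_{p_1,\langle a_1|}^{N-t}}^{\dag}.
\end{align*}
Here the ordering $p_1<p_2<\cdots<p_t$ together with the fact that the traces are applied from the largest position down to the smallest means that each index $p_i$ retains its original numerical label—no shifting of indices is required.

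Finally, I would invoke Lemma \ref{lem31} part 2 with $n=N-t$ and $|\Psi\rangle=|a_1a_2\cdots a_t\rangle$ to collapse the product $D_{p_1,\langle a_1|}^{N-t}\cdots D_{p_t,\langle a_t|}^{N-1}$ into the single tensor product operator $D_{P,\langle\bm{a}|}^{N-t}$. Taking Hermitian conjugates on the right half gives the claimed formula. The only step requiring any real thought is verifying the index bookkeeping in the iteration matches the hypothesis of Lemma \ref{lem31} exactly; everything else is a routine matrix calculation.
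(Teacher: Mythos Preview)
Your proposal is correct and follows essentially the same route as the paper's own proof: establish the single-qudit identity ${\rm Tr}_p^n(\rho)=\sum_{a\in\mathcal{Q}}D_{p,\langle a|}^{n-1}\rho\,{D_{p,\langle a|}^{n-1}}^\dag$ by a direct computation on basis vectors, iterate according to Definition~\ref{def23}, and then collapse the product of single-deletion operators into $D_{P,\langle\bm{a}|}^{N-t}$ via Lemma~\ref{lem31}. The only cosmetic difference is that the paper writes the inner product $\langle a|x_p\rangle$ where you write $\delta_{a,x_p}$, and your explicit remark about index bookkeeping (no shifting because traces are applied from largest position downward) is left implicit in the paper.
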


\begin{proof}
For any $n\geq2$, $p\in[n]$, $a\in\mathcal{Q}$, and $\bm{x}\in\mathcal{Q}^n$, we have
\begin{eqnarray*}
D_{p,\langle a|}^{n-1}|\bm{x}\rangle
&=&(\mathbb{I}_{q^{p-1}}\otimes \langle a|\otimes\mathbb{I}_{q^{n-p}})(|x_1\cdots x_{p-1}\rangle\otimes|x_p\rangle\otimes|x_{p+1}\cdots x_n\rangle)\\
&=&\langle a|x_p\rangle|x_1\cdots x_{p-1}x_{p+1}\cdots x_{n}\rangle.
\end{eqnarray*}
For a quantum state $\rho=\sum_{\bm{x},\bm{y}\in\mathcal{Q}^n}m_{\bm{x},\bm{y}}|\bm{x}\rangle\langle \bm{y}|$,
\begin{eqnarray*}
{\rm Tr}_p^n(\rho)&=&\sum_{\bm{x},\bm{y}\in\mathcal{Q}^n}m_{\bm{x},\bm{y}}{\rm Tr}(|x_p\rangle\langle y_p|)|\bm{x}'\rangle\langle\bm{y}'|\\
&=&\sum_{\bm{x},\bm{y}\in\mathcal{Q}^n}m_{\bm{x},\bm{y}}
\left(\rule{0pt}{2.5ex}{\textstyle\sum_{a\in\mathcal{Q}}\langle a|x_p\rangle\langle y_p|a\rangle}\right)|\bm{x}'\rangle\langle\bm{y}'|\\
&=&\sum_{a\in\mathcal{Q}}\sum_{\bm{x},\bm{y}\in\mathcal{Q}^n}m_{\bm{x},\bm{y}}\langle a|x_p\rangle|\bm{x}'\rangle\langle\bm{y}'|\langle y_p|a\rangle\\
&=&\sum_{a\in\mathcal{Q}}D_{p,\langle a|}^{n-1}\left(\rule{0pt}{2.5ex}{\textstyle\sum_{\bm{x},\bm{y}\in\mathcal{Q}^n}m_{\bm{x},\bm{y}}|\bm{x}\rangle\langle \bm{y}|}\right){D_{p,\langle a|}^{n-1}}^\dag\\
&=&\sum_{a\in\mathcal{Q}}D_{p,\langle a|}^{n-1}\rho{D_{p,\langle a|}^{n-1}}^\dag
\end{eqnarray*}
holds, where $|\bm{x}'\rangle=|x_1\cdots x_{p-1}x_{p+1}\cdots x_n\rangle$ and $|\bm{y}'\rangle=|y_1\cdots y_{p-1}y_{p+1}\cdots y_n\rangle$.
Therefore, we have
\begin{eqnarray*}
{\rm Era}_{P}^N(\rho)
&=&{\rm Tr}_{p_1}^{N-t+1}\circ\dots\circ{\rm Tr}_{p_t}^{N}(\rho)\\
&=&\sum_{a_1\in\mathcal{Q}}\dots\sum_{a_t\in\mathcal{Q}}D_{p_1,\langle a_1|}^{N-t}\cdots D_{p_t,\langle a_t|}^{N-1}\rho{D_{p_t,\langle a_t|}^{N-1}}^\dag\cdots{D_{p_1,\langle a_1|}^{N-t}}^\dag\\
&=&\sum_{\bm{a}\in\mathcal{Q}^t}D_{P,\langle\bm{a}|}^{N-t}\rho {D_{P,\langle\bm{a}|}^{N-t}}^\dag
\end{eqnarray*}
by Definition \ref{def23} and Lemma \ref{lem31}.
\end{proof}

From Definition \ref{def23} and Lemma \ref{lem42}, we get the Kraus form for deletion channels, which is represented as
\begin{eqnarray}
{\rm Del}_t^N(\rho)&=&\sum_{P,\bm{a}}p(P)D_{P,\langle\bm{a}|}^{N-t}\rho{D_{P,\langle\bm{a}|}^{N-t}}^\dag\nonumber\\
&=&\int_{U}\sum_{P,\bm{a}}\mu_2(U,P,\bm{a})D_{P,\langle\bm{a}|U^\dag}^{N-t}\rho{D_{P,\langle\bm{a}|U^\dag}^{N-t}}^\dag dU.\label{eq3}
\end{eqnarray}
Note that by writing in integral form with a probability distribution $\mu_2$ as in (\ref{eq3}), the deletion channel can be regarded as having an infinite number of Kraus operators, just like the insertion channel.

Lemma \ref{lem43} below describes the intuitive result that deleting an inserted qudit leaves the original state unchanged.
We can see this by directly calculating the Kraus operator.
This lemma indicates that the operation of deleting after insertion is also included in the insdel error described in Definition \ref{def24}.
\begin{lem}\label{lem43}
Let $P=\{p\}\subset [N+1]$ and $\sigma\in S(\mathbb{C}^{q})$.
Then, for any quantum state $\rho\in S(\mathbb{C}^{q\otimes N})$,
\begin{align*}
{\rm Era}_P^{N+1}\circ{\rm Ins}_{P,\sigma}^N(\rho)=\rho.
\end{align*}
\end{lem}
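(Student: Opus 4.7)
The plan is to apply the Kraus representations we just established in Lemmas~\ref{lem41} and \ref{lem42}, and then invoke the ``collision'' case $p_1=p_2$ of Lemma~\ref{lem33} to collapse the composition into a scalar times the identity.

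First, I would write $\sigma$ via its spectral decomposition and use Lemma~\ref{lem41} (with $t=1$) to obtain
\begin{equation*}
{\rm Ins}_{P,\sigma}^N(\rho)=\sum_{a\in\mathcal{Q}} p(a)\,I_{p,U|a\rangle}^{N}\rho\,{I_{p,U|a\rangle}^{N}}^{\dag}
\end{equation*}
for some probabilities $p(a)$ and some $q\times q$ unitary $U$. Next I would apply Lemma~\ref{lem42} (again with $t=1$) to the resulting state on $N+1$ qudits, which yields
\begin{equation*}
{\rm Era}_P^{N+1}\bigl({\rm Ins}_{P,\sigma}^N(\rho)\bigr)=\sum_{a,b\in\mathcal{Q}} p(a)\,D_{p,\langle b|}^{N}I_{p,U|a\rangle}^{N}\,\rho\,{I_{p,U|a\rangle}^{N}}^{\dag}{D_{p,\langle b|}^{N}}^{\dag}.
\end{equation*}

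Now the key step: since the insertion and the deletion act on the same index $p_1=p_2=p$, the $p_1=p_2$ clause of Lemma~\ref{lem33} gives $D_{p,\langle b|}^{N}I_{p,U|a\rangle}^{N}=\langle b|U|a\rangle\,\mathbb{I}_{q^{N}}$. Substituting this, together with its adjoint on the right side of $\rho$, turns the double sum into
\begin{equation*}
\sum_{a\in\mathcal{Q}} p(a)\Bigl(\sum_{b\in\mathcal{Q}}|\langle b|U|a\rangle|^{2}\Bigr)\rho.
\end{equation*}
Finally, unitarity of $U$ gives $\sum_{b}|\langle b|U|a\rangle|^{2}=\langle a|U^{\dag}U|a\rangle=1$, and $\sum_a p(a)=1$, so the whole expression equals $\rho$.

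There is no real obstacle here; the only thing to be careful about is bookkeeping the superscripts of $I$ and $D$ so that the compositions are well-defined (both operators in $D_{p,\langle b|}^{N}I_{p,U|a\rangle}^{N}$ must share the same intermediate dimension $q^{N+1}$, which they do), and to recognize that the operational picture of ``delete what you just inserted'' is captured algebraically by exactly the diagonal case of Lemma~\ref{lem33}.
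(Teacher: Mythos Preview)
Your argument is correct and is precisely the ``direct Kraus-operator calculation'' the paper alludes to but does not spell out: write the insertion via Lemma~\ref{lem41}, the erasure via Lemma~\ref{lem42}, collapse $D_{p,\langle b|}^{N}I_{p,U|a\rangle}^{N}$ to $\langle b|U|a\rangle\,\mathbb{I}_{q^{N}}$ using the $p_1=p_2$ case of Lemma~\ref{lem33}, and sum. The bookkeeping of superscripts is right, and the unitarity step is exactly what is needed.
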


For any quantum state $\rho\in S(\mathbb{C}^{q\otimes N})$, the state after insdel error described in Definition \ref{def24} can be calculated as
\begin{eqnarray}
\!\!\!{\rm InsDel}_{t_1,t_2}^{N}(\rho)
&=&\iint_{U,V}\sum_{P,Q,\bm{a},\bm{b}}\mu_{\bm{u}} I_{P,U|\bm{a}\rangle}^{N-t_2}D_{Q,\langle\bm{b}|V^\dag}^{N-t_2}\rho {D_{Q,\langle\bm{b}|V^\dag}^{N-t_2}}^{\!\dag} {I_{P,U|\bm{a}\rangle}^{N-t_2}}^{\!\dag} dUdV\label{eq4}
\end{eqnarray}
by Equations (\ref{eq2}) and (\ref{eq3}), where $\mu_{\bm{u}}$ is a non-negative value that depends on $\bm{u}=(U,V,P,Q,\bm{a},\bm{b})$.
We can easily calculate the matrix $I_{P,U|\bm{a}\rangle}^{N-t_2}D_{Q,\langle\bm{b}|V^\dag}^{N-t_2}$ such as in the example below.

\begin{ex}
Let $N=4$, $t_1=3$, $t_2=2$, $P=\{2,3,5\}$, $Q=\{1,3\}$, $U|\bm{a}\rangle=|\Psi\rangle=|\psi_1\psi_2\psi_3\rangle\in (C_q)^{\otimes 3}$, and $V|\bm{b}\rangle=|\Phi\rangle=|\phi_1\phi_2\rangle\in(C_q)^{\otimes 2}$.
Then, $I_{P,U|\bm{a}\rangle}^{2}D_{Q,\langle\bm{b}|V^\dag}^{2}$ is one of the Kraus operators corresponding to the action of inserting the second, third, and fifth components after deleting the first and third components as follows:
\begin{align*}
|x_1x_2x_3x_4\rangle\rightarrow|x_2x_4\rangle\rightarrow|x_2\psi_1\psi_2x_4\psi_3\rangle.
\end{align*}
The matrices $I_{P,U|\bm{a}\rangle}^{2}$ and $D_{Q,\langle\bm{b}|V^\dag}^2$ can be expressed as
\begin{align*}
\begin{array}{rcccccccccccc}
I_{P,|\Psi\rangle}^{2}&\!=\!&\!\!\!\!&\!\!\!\!&\mathbb{I}_2&\!\!\otimes\!\!&|\psi_1\rangle&\!\!\otimes\!\!&|\psi_2\rangle&\!\!\otimes\!\!&\mathbb{I}_2&\!\!\otimes\!\!&|\psi_3\rangle,\\
D_{Q,\langle\Phi|}^2&\!=\!&\langle \phi_1|&\!\!\otimes\!\!&\mathbb{I}_2&\!\!\otimes\!\!&\langle \phi_2|&\!\!\!\!&\!\!\!\!&\!\!\otimes\!\!&\mathbb{I}_2&\!\!\!.\!&\\
\end{array}
\end{align*}
Therefore, we obtain
\begin{eqnarray*}
I_{P,|\Psi\rangle}^{2}D_{Q,\langle\Phi|}^2&=&\langle\phi_1|\,\otimes\,\mathbb{I}_2\,\otimes\,|\psi_1\rangle\,\otimes\,|\psi_2\rangle\,\otimes\,\langle\phi_2|\,\otimes\,\mathbb{I}_2\,\otimes\,|\psi_3\rangle.
\end{eqnarray*}
Note that $\bm{x}\bm{y}^\dag=\bm{x}\otimes\bm{y}^\dag=\bm{y}^\dag\otimes\bm{x}$ holds for any vectors $\bm{x},\bm{y}$.
\end{ex}

\section{Proof of the main theorem}\label{prf}
The aim of this section is to prove the main theorem. 

From Equation (\ref{eq4}) and Lemma \ref{lem31}, the Kraus operator $A_{\bm{u}}$ for $(t_1,t_2)$-insdel channel can be expressed as a product of $(t_1+t_2)$ block matrices
\begin{eqnarray}
A_{\bm{u}}&=&\sqrt{\mu_{\bm{u}}}\,I_{P,U|\bm{a}\rangle}^{N-t_2}D_{Q,\langle\bm{b}|V^\dag}^{N-t_2}\nonumber\\
&=&\sqrt{\mu_{\bm{u}}}\,I_{p_{t_1},|\psi_{t_1}\rangle}^{N-t_2+t_1-1}\cdots I_{p_{1},|\psi_{1}\rangle}^{N-t_2}
D_{q_{1},\langle\phi_1|}^{N-t_2}\cdots D_{q_{t_2},\langle \phi_{t_2}|}^{N-1},\label{eq*}
\end{eqnarray}
where $U|\bm{a}\rangle=|\psi_1\cdots\psi_{t_1}\rangle\in(C_q)^{\otimes t_1}$
 and $V|\bm{b}\rangle=|\phi_1\cdots\phi_{t_2}\rangle\in(C_q)^{\otimes t_2}$.
Therefore, the KL conditions for the $(t_1,t_2)$-insdel channel can be written as for all $i,j\in\{0,1,\dots,d-1\}$ and all $\bm{u}=(U,V,P,Q,\bm{a},\bm{b}),\bm{v}=(U',V',P',Q',\bm{a}',\bm{b}')$,
\begin{align}
\langle i_L|{D_{Q,\langle\bm{b}|V^\dag}^{N-t_2}}^{\!\dag}{I_{P,U|\bm{a}\rangle}^{N-t_2}}^{\!\dag} I_{P',U'|\bm{a}'\rangle}^{N-t_2}D_{Q',\langle\bm{b}'|V'^\dag}^{N-t_2}|j_L\rangle=\delta_{i,j}g_{\bm{u},\bm{v}}.\!\label{eq5}
\end{align}

The following two lemmas will help us establish the equivalence of insertion and deletions errors under the KL conditions.

\begin{lem}\label{lem51}
For $t_1\geq1$, any $(t_1,t_2)$-insdel quantum code is a $(t_1-1,t_2+1)$-insdel quantum code.
\end{lem}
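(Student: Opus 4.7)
The plan is to establish an operator identity
\begin{align*}
B_{\bm{u}'}^\dag B_{\bm{v}'} = A_{\bm{u}}^\dag A_{\bm{v}}
\end{align*}
for every pair of $(t_1-1,t_2+1)$-Kraus operators $B_{\bm{u}'},B_{\bm{v}'}$ from equation~(\ref{eq*}), where $A_{\bm{u}},A_{\bm{v}}$ are corresponding $(t_1,t_2)$-Kraus operators. Given such an identity, the KL conditions~(\ref{eq5}) for the $(t_1,t_2)$-insdel channel immediately yield those for the $(t_1-1,t_2+1)$-insdel channel via
\begin{align*}
\langle i_L|B_{\bm{u}'}^\dag B_{\bm{v}'}|j_L\rangle = \langle i_L|A_{\bm{u}}^\dag A_{\bm{v}}|j_L\rangle = \delta_{i,j}\,g_{\bm{u},\bm{v}},
\end{align*}
so setting $h_{\bm{u}',\bm{v}'} := g_{\bm{u},\bm{v}}$ and invoking Fact~\ref{kl} completes the proof.

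Expanding $B_{\bm{u}'}^\dag B_{\bm{v}'}$ using equations~(\ref{eq*}) and~(\ref{eq1}) together with Lemma~\ref{lem31} yields a product of elementary insertions and deletions of the form $I^{(t_2+1)}\,D^{(t_1-1)}\,I^{(t_1-1)}\,D^{(t_2+1)}$, where $I^{(k)}$ stands for $k$ elementary insertion operators and $D^{(k)}$ for $k$ elementary deletion operators. In contrast, $A_{\bm{u}}^\dag A_{\bm{v}}$ has the form $I^{(t_2)}\,D^{(t_1)}\,I^{(t_1)}\,D^{(t_2)}$. Both contain $2(t_1+t_2)$ elementary operators, so the task reduces to ``transferring'' one insertion from the outer left block into the inner right block, and one deletion from the outer right block into the inner left block.

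For the first transfer, select the rightmost element of the outer left $I^{(t_2+1)}$ block (which is adjacent to the central $D^{(t_1-1)}$) and commute it rightward past each of those $t_1-1$ deletions using Lemma~\ref{lem34}. That lemma always rewrites $I \cdot D$ as $D' \cdot I'$ with shifted positions and superscripts and, crucially, never annihilates, so the inserted state is preserved across each of the $t_1-1$ swaps. Afterward, the transferred insertion merges with the inner $I^{(t_1-1)}$ block into an $I^{(t_1)}$ block. Perform the symmetric transfer on the right: select the leftmost element of the outer right $D^{(t_2+1)}$ block and commute it leftward through the now-$t_1$-sized inner insertion block via $t_1$ further applications of Lemma~\ref{lem34}. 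It then joins the central deletion block to form $D^{(t_1)}$. The resulting expression has the form $I^{(t_2)}\,D^{(t_1)}\,I^{(t_1)}\,D^{(t_2)}$, and re-collapsing via Lemma~\ref{lem31} produces a product $I_{\widetilde Q,|\widetilde\Phi\rangle}^{N-t_2}\,D_{\widetilde P,\langle\widetilde\Psi|}^{N-t_2}\,I_{\widetilde P',|\widetilde\Psi'\rangle}^{N-t_2}\,D_{\widetilde Q',\langle\widetilde\Phi'|}^{N-t_2}$ that is precisely $A_{\bm{u}}^\dag A_{\bm{v}}$ for $(t_1,t_2)$-Kraus operators $A_{\bm{u}},A_{\bm{v}}$ whose indices and states are read off from the transferred operators and their shifted positions.

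The main technical obstacle is the bookkeeping across the $2t_1 - 1$ invocations of Lemma~\ref{lem34}: one must propagate the position and superscript shifts consistently, and verify that the merged tensor factors of the new central blocks lie in $(C_q)^{\otimes t_1}$ and its dual, so that they may legitimately be written as $U|\bm{a}\rangle$ and $\langle\bm{b}|V^\dag$ as demanded by equation~(\ref{eq*}). No further lemma beyond the always-nonvanishing commutation relation of Lemma~\ref{lem34} is required; in particular, the cancellation branch of Lemma~\ref{lem33} plays no role in the argument.
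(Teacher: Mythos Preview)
Your approach is essentially the paper's: expand $B_{\bm u'}^\dag B_{\bm v'}$ via equation~(\ref{eq1}) and Lemma~\ref{lem31}, then repeatedly apply Lemma~\ref{lem34} to migrate one insertion and one deletion toward the centre, converting the block pattern $I^{(t_2+1)}D^{(t_1-1)}I^{(t_1-1)}D^{(t_2+1)}$ into $I^{(t_2)}D^{(t_1)}I^{(t_1)}D^{(t_2)}$, and conclude via the KL conditions for the $(t_1,t_2)$-channel. The paper performs the two migrations in parallel (its Figure~1), you do them sequentially; either way it is $2t_1-1$ invocations of Lemma~\ref{lem34}.

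The one concrete oversight is your assertion that ``no further lemma beyond the always-nonvanishing commutation relation of Lemma~\ref{lem34} is required.'' After the commutations, the elementary factors within each of the four blocks need not appear with position indices in the monotone order that Lemma~\ref{lem31} demands. For example, the insertion you slide rightward carries the position $q_1$ (the smallest element of $Q$), and by Lemma~\ref{lem34} its index is either preserved or incremented at each step; it can therefore arrive with an index smaller than $p'_{t_1-1}$, so the merged $I^{(t_1)}$ block is not in decreasing-position order and cannot be collapsed directly by Lemma~\ref{lem31}. The paper handles exactly this by invoking Lemma~\ref{lem32} repeatedly to re-sort the factors within each block before collapsing. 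This is a minor bookkeeping omission rather than a structural flaw, but it contradicts your explicit claim.
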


\begin{proof}
From Equation (\ref{eq*}), we denote any two Kraus operators $B_{\bm{u}},B_{\bm{v}}$ for the $(t_1-1,t_2+1)$-insdel channel as
\begin{eqnarray*}
B_{\bm{u}}&=&I_{P,U|\bm{a}\rangle}^{N-(t_2+1)}D_{Q,\langle\bm{b}|V^\dag}^{N-(t_2+1)}\nonumber\\
&=&\underbrace{I_{p_{t_1-1},|\psi_{t_1-1}\rangle}^{N-t_2+t_1-3}\cdots I_{p_{1},|\psi_{1}\rangle}^{N-t_2-1}}_{(t_1-1) \textrm{ matrices}}
\underbrace{D_{q_{1},\langle\phi_1|}^{N-t_2-1}\cdots D_{q_{t_2+1},\langle\phi_{t_2+1}|}^{N-1}}_{(t_2+1) \textrm{ matrices}},\\
B_{\bm{v}}&=&I_{P',U'|\bm{a}'\rangle}^{N-(t_2+1)}D_{Q',\langle\bm{b}'|V'^\dag}^{N-(t_2+1)}\nonumber\\
&=&\underbrace{I_{p'_{t_1-1},|\psi'_{t_1-1}\rangle}^{N-t_2+t_1-3}\cdots I_{p'_{1},|\psi'_{1}\rangle}^{N-t_2-1}}_{(t_1-1) \textrm{ matrices}}
\underbrace{D_{q'_{1},\langle\phi'_1|}^{N-t_2-1}\cdots D_{q'_{t_2+1},\langle\phi'_{t_2+1}|}^{N-1}}_{(t_2+1) \textrm{ matrices}},
\end{eqnarray*}
where $U'|\bm{a}'\rangle=|\psi'_1\cdots\psi'_{t_1-1}\rangle$ and $V'|\bm{b}'\rangle=|\phi'_1\cdots\phi'_{t_2+1}\rangle$.
Note that when considering the KL condition, we can ignore the constant multiple of the Kraus operator.
By noting Equation (\ref{eq1}) and using Lemma \ref{lem34} repeatedly, we can calculate $B_{\bm{u}}^\dag B_{\bm{v}}$ using Fig \ref{fig1}.
Here, superscripts and subscripts are omitted to avoid confusion.
However, if we consider the superscripts, we can use Lemma \ref{lem34} in each matrix operation.
\begin{figure}[t]
\begin{align*}
\begin{array}{cccccccccccccccccccccl}
\!B_{\bm{u}}^\dag B_{\bm{v}}\!&\!=\!\!\!&\!\!\!D^\dag\!\!\!&\!\!\!\cdots\! &\!\!\!D^\dag\!\!\!&\!\!\!\colorbox[gray]{0.6}{\!\!$D^\dag$\!\!}\!\!\! &\!\!\!I^\dag\!\!\!&\!\!\!I^\dag\!\!\!&\!\cdots\! &\!\!\!I^\dag\!\!\!&\!\!\!I^\dag\!\!\!&\!\!\!I^\dag\!\!\! &\!I\!&\!I\!&\!I\!&\!\cdots\! &\!I\!&\!I\!&\!\!\colorbox[gray]{0.6}{$\!D\!$}\!\!&\!D\!&\!\cdots\!\! &\!D\!\\
&\!=\!\!\!&\!I\!&\!\!\!\cdots\! &\!I\!&\!\!\colorbox[gray]{0.6}{$I$}\!\! &\!D\!&\!D\!&\!\cdots\! &\!D\!&\!D\!&\!D\!&\!I\!&\!I\!&\!I\!&\!\cdots\! &\!I\!&\!I\!&\!\!\colorbox[gray]{0.6}{$\!D\!$}\!\!&\!D\!&\!\cdots\!\! &\!D\!\\
&\!=\!\!\!&\!I\!&\!\!\!\cdots\! &\!I\!&\!D\! &\!\!\colorbox[gray]{0.6}{$I$}\!\! &\!D\!&\!\cdots\! &\!D\!&\!D\!&\!D\!&\!I\!&\!I\!&\!I\!&\!\cdots\! &\!I\!&\!\!\colorbox[gray]{0.6}{$\!D\!$}\!\!&\!I\!&\!D\!&\!\cdots\!\! &\!D\!\\
&\!=\!\!\!&\!I\!&\!\!\!\cdots\! &\!I\!&\!D\! &\!D\!&\!\!\colorbox[gray]{0.6}{$I$}\!\! &\!\cdots\! &\!D\!&\!D\!&\!D\!&\!I\!&\!I\!&\!I\!&\!\cdots\! &\!\!\colorbox[gray]{0.6}{$\!D\!$}\!\!&\!I\!&\!I\!&\!D\!&\!\cdots\!\! &\!D\!\\
&\!=\!\!\!&\cdots&&&&&&&&\\
&\!=\!\!\!&\!I\!&\!\!\!\cdots\! &\!I\!&\!D\! &\!D\!&\!D\!&\!\cdots\! &\!\!\colorbox[gray]{0.6}{$I$}\!\!&\!D\!&\!D\!&\!I\!&\!I\!&\!\!\colorbox[gray]{0.6}{$\!D\!$}\!\!&\!\cdots\! &\!I\!&\!I\!&\!I\!&\!D\!&\!\cdots\!\! &\!D\!\\
&\!=\!\!\!&\!I\!&\!\!\!\cdots\! &\!I\!&\!D\! &\!D\!&\!D\!&\!\cdots\! &\!D\!&\!\!\colorbox[gray]{0.6}{$I$}\!\!&\!D\!&\!I\!&\!\!\colorbox[gray]{0.6}{$\!D\!$}\!\!&\!I\!&\!\cdots\! &\!I\!&\!I\!&\!I\!&\!D\!&\!\cdots\!\! &\!D\!\\
&\!=\!\!\!&\!I\!&\!\!\!\cdots\! &\!I\!&\!D\! &\!D\!&\!D\!&\!\cdots\! &\!D\!&\!D\!&\!\!\colorbox[gray]{0.6}{$I$}\!\!&\!\!\colorbox[gray]{0.6}{$\!D\!$}\!\!&\!I\!&\!I\!&\!\cdots\! &\!I\!&\!I\!&\!I\!&\!D\!&\!\cdots\!\! &\!D\!\\
&\!=\!\!\!&\!I\!&\!\!\!\cdots\! &\!I\!&\!D\! &\!D\!&\!D\!&\!\cdots\! &\!D\!&\!D\!&\!\!\colorbox[gray]{0.6}{$\!D\!$}\!\!&\!\!\colorbox[gray]{0.6}{$I$}\!\!&\!I\!&\!I\!&\!\cdots\! &\!I\!&\!I\!&\!I\!&\!D\!&\!\cdots\!\! &\!D\!\\
&\!=\!\!\!&\!\!\!D^\dag\!\!\!&\!\!\!\cdots\! &\!\!\!D^\dag\!\!\!&\!\!\!I^\dag\!\!\!&\!\!\!I^\dag\!\!\!&\!\!\!I^\dag\!\!\!&\!\cdots\! &\!\!\!I^\dag\!\!\!&\!\!\!I^\dag\!\!\!&\!\!\colorbox[gray]{0.6}{$\!I^\dag\!$}\!\!&\!\!\colorbox[gray]{0.6}{$I$}\!\!&\!I\!&\!I\!&\!\cdots\! &\!I\!&\!I\!&\!I\!&\!D\!&\!\cdots\!\! &\!D.\!\\
\end{array}
\end{align*}
\caption{Calculation of $B_{\bm{u}}^\dag B_{\bm{v}}$}
\label{fig1}
\end{figure}
Thus, $B_{\bm{u}}^\dag B_{\bm{v}}$ can be expressed as
\begin{eqnarray*}
B_{\bm{u}}^\dag B_{\bm{v}}
&=&\underbrace{D^\dag\cdots\, D^\dag}_{t_2+1}\underbrace{I^\dag\cdots\, I^\dag}_{t_1-1}
\underbrace{I\cdots\, I}_{t_1-1}\underbrace{D\cdots\, D}_{t_2+1}\\
&=&\underbrace{D^\dag\cdots\, D^\dag}_{t_2}\underbrace{I^\dag\cdots\, I^\dag}_{t_1}
\underbrace{I\cdots\, I}_{t_1}\underbrace{D\cdots\, D}_{t_2}.
\end{eqnarray*}
Furthermore, repeatedly applying Lemma \ref{lem32} gives $B_{\bm{u}}^\dag B_{\bm{v}}=A_{\bm{u}'}^\dag A_{\bm{v}'}$ for some Kraus operators $A_{\bm{u}'},A_{\bm{v}'}$ of the $(t_1,t_2)$-insdel channel.
From Equation (\ref{eq5}), we get 
$\langle i_L|B_{\bm{u}}^\dag B_{\bm{v}}|j_L\rangle=\delta_{i,j}g_{\bm{u}',\bm{v}'}$ 
for all $i,j\in\{0,1,\dots,d-1\}$ and all $\bm{u}',\bm{v}'$.
Since the pair $(\bm{u}',\bm{v}')$ is uniquely determined by $(\bm{u},\bm{v})$, the KL conditions for the $(t_1-1,t_2+1)$-insdel code hold for every $\bm{u},\bm{v}$.
Fact \ref{kl} implies that $\mathcal{C}$ is a $(t_1-1,t_2+1)$-insdel code.
\end{proof}

\begin{lem}\label{lem52}
For $t_2\geq1$, any $(t_1,t_2)$-insdel quantum code is a $(t_1+1,t_2-1)$-insdel quantum code.
\end{lem}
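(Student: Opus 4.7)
The plan is to mirror the proof of Lemma~\ref{lem51} in the reverse direction. Given arbitrary Kraus operators $B_{\bm{u}},B_{\bm{v}}$ of the $(t_1+1,t_2-1)$-insdel channel, I aim to show that $B_{\bm{u}}^\dag B_{\bm{v}} = A_{\bm{u}'}^\dag A_{\bm{v}'}$ for some Kraus operators $A_{\bm{u}'},A_{\bm{v}'}$ of the $(t_1,t_2)$-insdel channel. Once this is in hand, Equation~\eqref{eq5} applied at the $(t_1,t_2)$ level (the given hypothesis) immediately yields $\langle i_L|B_{\bm{u}}^\dag B_{\bm{v}}|j_L\rangle=\delta_{i,j}g_{\bm{u}',\bm{v}'}$ for every $\bm{u},\bm{v}$, so Fact~\ref{kl} concludes that $\mathcal{C}$ is a $(t_1+1,t_2-1)$-insdel code.

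First I would invoke Equation~\eqref{eq*} to write $B_{\bm{u}}^\dag B_{\bm{v}}$ as a product of $2(t_1+t_2)$ factors with block structure
\[
\underbrace{D^\dag\cdots D^\dag}_{t_2-1}\;\underbrace{I^\dag\cdots I^\dag}_{t_1+1}\;\underbrace{I\cdots I}_{t_1+1}\;\underbrace{D\cdots D}_{t_2-1},
\]
and apply Equation~\eqref{eq1} to convert the two dagger blocks, yielding the no-dagger presentation $I^{t_2-1}D^{t_1+1}I^{t_1+1}D^{t_2-1}$. The target is $I^{t_2}D^{t_1}I^{t_1}D^{t_2}$, which after re-applying Equation~\eqref{eq1} to its first two blocks is exactly $A_{\bm{u}'}^\dag A_{\bm{v}'}$ for an appropriate choice of $(t_1,t_2)$-Kraus parameters $(\bm{u}',\bm{v}')$. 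Reaching this target is the reverse of the shuffle performed in Lemma~\ref{lem51}: one $I$ must migrate from the middle $I$-block into the left $I$-block, and one $D$ must migrate from the middle $D$-block into the right $D$-block. Once the block sizes match, Lemma~\ref{lem32} reorders within each block to produce a canonical $A_{\bm{u}'}^\dag A_{\bm{v}'}$.

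The hard part will be the commutation step itself. In Lemma~\ref{lem51} the required swaps are $ID\to DI$, which are unconditional by Lemma~\ref{lem34}. Here the reverse swaps $DI\to ID$ are needed; these are governed by Lemma~\ref{lem33}, which has a degenerate case $p_1=p_2$ where the product collapses to $\langle\psi_2|\psi_1\rangle\mathbb{I}_{q^n}$. My plan is to construct a row-by-row calculation parallel to Figure~\ref{fig1}, applying Equation~\eqref{eq1} locally at each step to reinterpret the would-be $DI$ swap as an $ID$ swap to which Lemma~\ref{lem34} applies unconditionally. The resulting figure should look structurally identical to Figure~\ref{fig1}, with the block sizes $(t_2-1,t_1+1,t_1+1,t_2-1)$ and $(t_2,t_1,t_1,t_2)$ playing the roles of the corresponding sizes from Lemma~\ref{lem51}.
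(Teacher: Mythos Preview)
Your overall architecture matches the paper's, but the key step---avoiding the degenerate case of Lemma~\ref{lem33} by ``reinterpreting'' a $DI$ swap as an $ID$ swap via Equation~\eqref{eq1}---does not work. Equation~\eqref{eq1} turns $D$ into $I^\dag$ and $I$ into $D^\dag$, so a product $D_{p_2,\langle\psi_2|}^{n}I_{p_1,|\psi_1\rangle}^{n}$ becomes $(I_{p_2,|\psi_2\rangle}^{n})^\dag(D_{p_1,\langle\psi_1|}^{n})^\dag=(D_{p_1,\langle\psi_1|}^{n}I_{p_2,|\psi_2\rangle}^{n})^\dag$, which is again a $DI$-type product inside the adjoint; Lemma~\ref{lem34} never becomes applicable. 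Equivalently, reading Lemma~\ref{lem34} right-to-left gives a $DI\to ID$ rule only when the two position indices are distinct, which is exactly the non-degenerate branch of Lemma~\ref{lem33}. The equal-index case is genuinely there: for instance, whenever $p_{t_1+1}=p'_{t_1+1}$ (the two largest insertion positions coincide), the central pair $I_{p_{t_1+1}}^\dag I_{p'_{t_1+1}}=D_{p_{t_1+1}}I_{p'_{t_1+1}}$ collapses to the scalar $\langle\psi_{t_1+1}|\psi'_{t_1+1}\rangle$ times an identity, not to an $ID$ pair.

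Consequently you cannot in general obtain $B_{\bm u}^\dag B_{\bm v}=A_{\bm u'}^\dag A_{\bm v'}$ with the full block sizes $(t_2,t_1,t_1,t_2)$. What actually happens (and what the paper's proof does) is $B_{\bm u}^\dag B_{\bm v}=c_{\bm u,\bm v}\,A_{\bm u'}^\dag A_{\bm v'}$ for some scalar $c_{\bm u,\bm v}\in\mathbb{C}$, where $A_{\bm u'},A_{\bm v'}$ may have the shorter block patterns $(t_2-1,t_1,t_1,t_2-1)$ or $(t_2-1,t_1-1,t_1,t_2)$ depending on how many degenerate collapses occur during the Lemma~\ref{lem33} shuffle. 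These shorter products are handled by the observation that a $(t_1,t_2)$-insdel code is automatically an $(s_1,s_2)$-insdel code for all $s_1\le t_1$, $s_2\le t_2$, so the KL conditions still apply and give $\langle i_L|B_{\bm u}^\dag B_{\bm v}|j_L\rangle=\delta_{i,j}\,c_{\bm u,\bm v}\,g_{\bm u',\bm v'}$. Your write-up needs both the scalar and this monotonicity remark to close the argument.
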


\begin{proof}
As in the proof of Lemma \ref{lem51}, denote any two Kraus operators $C_{\bm{u}},C_{\bm{v}}$ for the $(t_1+1,t_2-1)$-insdel channel as
\begin{eqnarray*}
C_{\bm{u}}&=&I_{P,U|\bm{a}\rangle}^{N-(t_2-1)}D_{Q,\langle\bm{b}|V^\dag}^{N-(t_2-1)}\\
&=&\underbrace{I_{p_{t_1+1},|\psi_{t_1+1}\rangle}^{N-t_2+t_1+1}\cdots I_{p_{1},|\psi_{1}\rangle}^{N-t_2+1}}_{(t_1+1) \textrm{ matrices}}
\underbrace{D_{q_{1},\langle\phi_1|}^{N-t_2+1}\cdots D_{q_{t_2-1},\langle\phi_{t_2-1}|}^{N-1}}_{(t_2-1) \textrm{ matrices}},\\
C_{\bm{v}}&=&I_{P',U'|\bm{a}'\rangle}^{N-(t_2-1)}D_{Q',\langle\bm{b}'|V'^\dag}^{N-(t_2-1)}\\
&=&\underbrace{I_{p'_{t_1+1},|\psi'_{t_1+1}\rangle}^{N-t_2+t_1+1}\cdots I_{p'_{1},|\psi'_{1}\rangle}^{N-t_2+1}}_{(t_1+1) \textrm{ matrices}}
\underbrace{D_{q'_{1},\langle\phi'_1|}^{N-t_2+1}\cdots D_{q'_{t_2-1},\langle\phi'_{t_2-1}|}^{N-1}}_{(t_2-1) \textrm{ matrices}}.\!\!\!
\end{eqnarray*}
This time using Lemma \ref{lem33} repeatedly, we can calculate $C_{\bm{u}}^\dag C_{\bm{v}}$ as Fig \ref{fig2}.
\begin{figure}[t]
\begin{align*}
\begin{array}{cccccccccccccccccccccl}
\!C_{\bm{u}}^\dag C_{\bm{v}}\!&\!=\!\!\!&\!\!\!D^\dag\!\!\!&\!\!\!\cdots\! &\!\!\!D^\dag\!\!\!&\!\!\!I^\dag\!\!\!&\!\!\!I^\dag\!\!\!&\!\!\!I^\dag\!\!\!&\!\cdots\! &\!\!\!I^\dag\!\!\!&\!\!\!I^\dag\!\!\!&\!\!\colorbox[gray]{0.6}{$\!I^\dag\!$}\!\!&\!\!\colorbox[gray]{0.6}{$I$}\!\!&\!I\!&\!I\!&\!\cdots\! &\!I\!&\!I\!&\!I\!&\!D\!&\!\cdots\!\! &\!D\!\\
&\!=\!\!\!&\!I\!&\!\!\!\cdots\! &\!I\!&\!D\! &\!D\!&\!D\!&\!\cdots\! &\!D\!&\!D\!&\!\!\colorbox[gray]{0.6}{$\!D\!$}\!\!&\!\!\colorbox[gray]{0.6}{$I$}\!\!&\!I\!&\!I\!&\!\cdots\! &\!I\!&\!I\!&\!I\!&\!D\!&\!\cdots\!\! &\!D\!\\
&\!=\!\!\!&\!I\!&\!\!\!\cdots\! &\!I\!&\!D\! &\!D\!&\!D\!&\!\cdots\! &\!D\!&\!D\!&\!\!\colorbox[gray]{0.6}{$I$}\!\!&\!\!\colorbox[gray]{0.6}{$\!D\!$}\!\!&\!I\!&\!I\!&\!\cdots\! &\!I\!&\!I\!&\!I\!&\!D\!&\!\cdots\!\! &\!D\!\\
&\!=\!\!\!&\!I\!&\!\!\!\cdots\! &\!I\!&\!D\! &\!D\!&\!D\!&\!\cdots\! &\!D\!&\!D\!&\!\!\colorbox[gray]{0.6}{$I$}\!\!&\!I\!&\!\!\colorbox[gray]{0.6}{$\!D\!$}\!\!&\!I\!&\!\cdots\! &\!I\!&\!I\!&\!I\!&\!D\!&\!\cdots\!\! &\!D\!\\
&\!=\!\!\!&\!I\!&\!\!\!\cdots\! &\!I\!&\!D\! &\!D\!&\!D\!&\!\cdots\! &\!D\!&\!D\!&\!\!\colorbox[gray]{0.6}{$I$}\!\!&\!I\!&\!I\!&\!\!\colorbox[gray]{0.6}{$\!D\!$}\!\!&\!\cdots\! &\!I\!&\!I\!&\!I\!&\!D\!&\!\cdots\!\! &\!D\!\\
&\!=\!\!\!&\cdots&&&&&&&&\\
&\!=\!\!\!&\!I\!&\!\!\!\cdots\! &\!I\!&\!D\! &\!D\!&\!D\!&\!\cdots\! &\!D\!&\!D\!&\!\!\colorbox[gray]{0.6}{$I$}\!\!&\!I\!&\!I\!&\!I\!&\!\cdots\! &\!\!\colorbox[gray]{0.6}{$\!D\!$}\!\!&\!I\!&\!I\!&\!D\!&\!\cdots\!\! &\!D\!\\
&\!=\!\!\!&\!I\!&\!\!\!\cdots\! &\!I\!&\!D\! &\!D\!&\!D\!&\!\cdots\! &\!D\!&\!D\!&\!\!\colorbox[gray]{0.6}{$I$}\!\!&\!I\!&\!I\!&\!I\!&\!\cdots\! &\!I\!&\!\!\colorbox[gray]{0.6}{$\!D\!$}\!\!&\!I\!&\!D\!&\!\cdots\!\! &\!D\!\\
&\!=\!\!\!&\!I\!&\!\!\!\cdots\! &\!I\!&\!D\! &\!D\!&\!D\!&\!\cdots\! &\!D\!&\!D\!&\!\!\colorbox[gray]{0.6}{$I$}\!\!&\!I\!&\!I\!&\!I\!&\!\cdots\! &\!I\!&\!I\!&\!\!\colorbox[gray]{0.6}{$\!D\!$}\!\!&\!D\!&\!\cdots\!\! &\!D\!\\
&\!=\!\!\!&\!I\!&\!\!\!\cdots\! &\!I\!&\!D\! &\!D\!&\!D\!&\!\cdots\! &\!D\!&\!\!\colorbox[gray]{0.6}{$I$}\!\!&\!D\!&\!I\!&\!I\!&\!I\!&\!\cdots\! &\!I\!&\!I\!&\!\!\colorbox[gray]{0.6}{$\!D\!$}\!\!&\!D\!&\!\cdots\!\! &\!D\!\\
&\!=\!\!\!&\!I\!&\!\!\!\cdots\! &\!I\!&\!D\! &\!D\!&\!D\!&\!\cdots\! &\!\!\colorbox[gray]{0.6}{$I$}\!\!&\!D\!&\!D\!&\!I\!&\!I\!&\!I\!&\!\cdots\! &\!I\!&\!I\!&\!\!\colorbox[gray]{0.6}{$\!D\!$}\!\!&\!D\!&\!\cdots\!\! &\!D\!\\
&\!=\!\!\!&\cdots&&&&&&&&\\
&\!=\!\!\!&\!I\!&\!\!\!\cdots\! &\!I\!&\!D\! &\!D\!&\!\!\colorbox[gray]{0.6}{$I$}\!\! &\!\cdots\! &\!D\!&\!D\!&\!D\!&\!I\!&\!I\!&\!I\!&\!\cdots\! &\!I\!&\!I\!&\!\!\colorbox[gray]{0.6}{$\!D\!$}\!\!&\!D\!&\!\cdots\!\! &\!D\!\\
&\!=\!\!\!&\!I\!&\!\!\!\cdots\! &\!I\!&\!D\! &\!\!\colorbox[gray]{0.6}{$I$}\!\! &\!D\!&\!\cdots\! &\!D\!&\!D\!&\!D\!&\!I\!&\!I\!&\!I\!&\!\cdots\! &\!I\!&\!I\!&\!\!\colorbox[gray]{0.6}{$\!D\!$}\!\!&\!D\!&\!\cdots\!\! &\!D\!\\
&\!=\!\!\!&\!I\!&\!\!\!\cdots\! &\!I\!&\!\!\colorbox[gray]{0.6}{$I$}\!\! &\!D\!&\!D\!&\!\cdots\! &\!D\!&\!D\!&\!D\!&\!I\!&\!I\!&\!I\!&\!\cdots\! &\!I\!&\!I\!&\!\!\colorbox[gray]{0.6}{$\!D\!$}\!\!&\!D\!&\!\cdots\!\! &\!D\!\\
&\!=\!\!\!&\!\!\!D^\dag\!\!\!&\!\!\!\cdots\! &\!\!\!D^\dag\!\!\!&\!\!\!\colorbox[gray]{0.6}{\!\!$D^\dag$\!\!}\!\!\! &\!\!\!I^\dag\!\!\!&\!\!\!I^\dag\!\!\!&\!\cdots\! &\!\!\!I^\dag\!\!\!&\!\!\!I^\dag\!\!\!&\!\!\!I^\dag\!\!\! &\!I\!&\!I\!&\!I\!&\!\cdots\! &\!I\!&\!I\!&\!\!\colorbox[gray]{0.6}{$\!D\!$}\!\!&\!D\!&\!\cdots\!\! &\!D.\!\\
\end{array}
\end{align*}
\caption{Calculation of $C_{\bm{u}}^\dag C_{\bm{v}}$}
\label{fig2}
\end{figure}
Note that, by Lemma \ref{lem33}, $DI=\langle\psi_2|\psi_1\rangle\mathbb{I}_{q^n}$ may occur in the middle of the calculation.
Thus, using $c_{\bm{u},\bm{v}}\in\mathbb{C}$ depending on $(\bm{u},\bm{v})$,
$C_{\bm{u}}^\dag C_{\bm{v}}$ can be expressed as
\begin{eqnarray*}
C_{\bm{u}}^\dag C_{\bm{v}}
&=&\underbrace{D^\dag\cdots\, D^\dag}_{t_2-1}\underbrace{I^\dag\cdots\, I^\dag}_{t_1+1}\underbrace{I\cdots\, I}_{t_1+1}\underbrace{D\cdots\, D}_{t_2-1}\\
&=&
\begin{cases}
c_{\bm{u},\bm{v}}\underbrace{D^\dag\cdots D^\dag}_{t_2-1}\underbrace{\,I^\dag\cdots \,I^\dag\,}_{t_1}\underbrace{\!\,\,I\,\cdots \,I\,\,\!}_{t_1}\underbrace{\,D\,\cdots \,D\,}_{t_2-1},\\
c_{\bm{u},\bm{v}}\underbrace{D^\dag\cdots D^\dag}_{t_2-1}\underbrace{\,I^\dag\cdots \,I^\dag\,}_{t_1-1}\underbrace{\!\,\,I\,\cdots \,I\,\,\!}_{t_1}\underbrace{\,D\,\cdots \,D\,}_{t_2},\\
\underbrace{D^\dag\cdots D^\dag}_{t_2}\underbrace{\,I^\dag\,\cdots \,I^\dag\,}_{t_1}\underbrace{\!\,\,I\,\,\cdots \,\,I\,\,\!}_{t_1}\underbrace{\,D\,\cdots \,D\,}_{t_2}.
\end{cases}
\end{eqnarray*}
By repeatedly applying Lemma \ref{lem32}, we obtain $C_{\bm{u}}^\dag C_{\bm{v}}=c_{\bm{u},\bm{v}}A_{\bm{u}'}^\dag A_{\bm{v}'}$ for some Kraus operators $A_{\bm{u}'},A_{\bm{v}'}$ of the $(t_1,t_2)$-insdel channel.
Note that for any non-negative integers $s_1\leq t_1$ and $s_2\leq t_2$, the $(t_1,t_2)$-insdel code is an $(s_1,s_2)$-insdel code.
From Equation (\ref{eq5}), we get 
$\langle i_L|C_{\bm{u}}^\dag C_{\bm{v}}|j_L\rangle=\delta_{i,j}c_{\bm{u},\bm{v}}g_{\bm{u}',\bm{v}'}$ 
for all $i,j\in\{0,1,\dots,d-1\}$ and all $\bm{u}',\bm{v}'$.
Since the pair $(\bm{u}',\bm{v}')$ is uniquely determined by $(\bm{u},\bm{v})$, the KL conditions for the $(t_1+1,t_2-1)$-insdel code hold for every $\bm{u},\bm{v}$.
From Fact \ref{kl}, it is shown that $\mathcal{C}$ is a $(t_1+1,t_2-1)$-insdel code.
\end{proof}

By Lemmas \ref{lem51} and \ref{lem52}, we have completed the proof of our main theorem, Theorem \ref{thm25}.

\section{Conclusion}\label{conc}
This paper provides proof of the equivalence of error-correction capability for quantum deletion and separable insertion errors.
Together constructions on permutation-invariant quantum codes \cite{Ouyang2014,Ouyang2017}, this implies the existence of quantum codes that correct separable insertions.
Broad range of permutation-invariant quantum codes \cite{Ouyang2014,Ouyang2016,OC2019} could be advantageous to use both in quantum storage \cite{Ouyang2020} and quantum metrology \cite{OSM2019} when both insertions and deletions occur.

\section*{Acknowledgment}
The authors thank Prof. Manabu Hagiwara for valuable discussions.
This paper is supported in part by KAKENHI 18H01435 and 21H03393.
Y.O. is supported in part by NUS startup grants (R-263-000-E32-133 and R-263-000-E32-731), and the National Research Foundation, Prime Minister's Office, Singapore and the Ministry of Education, Singapore under the Research Centres of Excellence programme.

\bibliography{bibtex}

\end{document}